\newtheorem{theorem}{Theorem}[section]
\newtheorem{lemma}[theorem]{Lemma}
\newtheorem{example}[theorem]{Example}
 \def\section{\@startsection {section}{1}{\z@}{-1.5ex plus -.5ex
 minus -.2ex}{1ex plus .2ex}{\large\bf}}
\begin{document}
\title{Determining when a truncated generalised Reed-Solomon code is Hermitian self-orthogonal}
\author{Simeon Ball and Ricard Vilar \thanks{The first author acknowledges the support of the Spanish Ministry of Science and Innovation grants MTM2017-82166-P and PID2020-113082GB-I00 funded by MCIN / AEI / 10.13039/501100011033.}}
 
\date{23 December 2021}
\maketitle

\begin{abstract}
We prove that there is a Hermitian self-orthogonal $k$-dimensional truncated generalised Reed-Solomon code of length $n \leqslant q^2$ over ${\mathbb F}_{q^2}$ if and only if there is a polynomial $g \in {\mathbb F}_{q^2}$ of degree at most $(q-k)q-1$ such that $g+g^q$ has $q^2-n$ distinct zeros. This allows us to determine the smallest $n$ for which there is a Hermitian self-orthogonal $k$-dimensional truncated generalised Reed-Solomon code of length $n$ over ${\mathbb F}_{q^2}$, verifying a conjecture of Grassl and R\"otteler. We also provide examples of Hermitian self-orthogonal $k$-dimensional generalised Reed-Solomon codes of length $q^2+1$ over ${\mathbb F}_{q^2}$, for $k=q-1$ and $q$ an odd power of two. 
\end{abstract}

\section{Introduction}

The study of Hermitian self-orthogonal linear codes is motivated by the fact that given such a code one can easily construct a quantum error-correcting code. A quantum error-correcting code is a subspace of $({\mathbb C}^q)^{\otimes n}$. The parameter $q$ is called the {\em local dimension} and corresponds to the number of mutually orthogonal states each quantum particle of the system has. A quantum code with minimum distance $d$ is able to detect errors, which act non-trivially on the code space, on up to $d-1$ of the subsystems and correct errors on up to $\frac{1}{2}(d-1)$ of the subsystems. 

Let ${\mathbb F}_q$ denote the finite field with $q$ elements. A linear code $C$ of length $n$ over ${\mathbb F}_q$ is a subspace of ${\mathbb F}_q^n$. If the minimum weight of a non-zero element of $C$ is $d$ then the minimum (Hamming) distance between any two elements of $C$ is $d$ and we say that $C$ is $[n,k,d]_q$ code, where $k$ is the dimension of the subspace $C$.

A canonical Hermitian form on ${\mathbb F}_{q^2}^n$ is given by
$$
(u,v)_h=\sum_{i=1}^n u_iv_i^q.
$$
If $C$ is a linear code over ${\mathbb F}_{q^2}$ then its {\em Hermitian dual} is defined as
$$
C^{\perp_h}=\{ v \in {\mathbb F}_{q^2}^n \ | \ (u,v)_h=0, \ \mathrm{for} \ \mathrm{all} \ u \in C \}.
$$

One very common construction of quantum stabiliser codes relies on the following theorem from Ketkar et al. \cite[Corollary 19]{KKKS2006}. It is a generalisation from the qubit case of a construction introduced by Calderbank et al. \cite[Theorem 2]{CRSS1998}.

\begin{theorem} \label{sigmaortog}
If there is a $[n,k,d']_{q^2}$ linear code $C$ such that $C \subseteq C^{\perp_h}$ then there exists an $ [\![ n,n-2k,d]\!] _q$ quantum code, where $d$ is the minimum weight of the elements of $C^{\perp_h} \setminus C$ if $k \neq \frac{1}{2}n$ and $d$  is the minimum weight of the non-zero elements of $C^{\perp_h}=C$ if $k=\frac{1}{2}n$.
\end{theorem}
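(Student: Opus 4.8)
The plan is to use the qudit stabiliser formalism. Recall that the generalised Pauli (Heisenberg--Weyl) group $\mathcal{P}_n$ acting on $(\C^q)^{\ot n}$ is generated by the shift operators $X_a\colon |x\rangle\mapsto|x+a\rangle$ and the clock operators $Z_b\colon |x\rangle\mapsto\omega^{\langle b,x\rangle}|x\rangle$ (with $\omega$ a primitive $p$-th root of unity, $p=\mathrm{char}\,\F_q$, and $\langle\cdot,\cdot\rangle$ the appropriate $\F_q$-bilinear trace pairing on $\F_q^n$), together with the scalars. The quotient of $\mathcal{P}_n$ by its centre $Z(\mathcal{P}_n)$ of scalars is an $\F_q$-vector space of dimension $2n$ on which the group commutator descends to a non-degenerate alternating (trace-symplectic) form: two Pauli operators commute if and only if the corresponding vectors of $\F_q^{2n}$ are orthogonal for this form.

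First I would record the classical-to-symplectic dictionary over $\F_{q^2}$. Fixing an $\F_q$-basis of $\F_{q^2}$ and writing each $v\in\F_{q^2}^n$ as a pair in $\F_q^{2n}$ gives an $\F_q$-linear bijection $\F_{q^2}^n\to\F_q^{2n}$ that is a Hamming isometry (a coordinate of $v$ vanishes exactly when the corresponding pair vanishes) and that carries $\mathrm{Tr}_{\F_{q^2}/\F_q}$ of the Hermitian form $(\cdot,\cdot)_h$ to a scalar multiple of the trace-symplectic form on $\F_q^{2n}$. In particular $C\subseteq C^{\perp_h}$ translates into the statement that the image $\overbar{C}\subseteq\F_q^{2n}$ is totally isotropic, and then the symplectic dual of $\overbar{C}$ is precisely the image of $C^{\perp_h}$, of $\F_q$-dimension $2n-2k$.

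Next I would lift $\overbar{C}$ to an honest abelian subgroup $S\leqslant\mathcal{P}_n$ with $S\cap Z(\mathcal{P}_n)=\{I\}$. Since $\overbar{C}$ is isotropic the corresponding Pauli operators pairwise commute, and one can choose scalar prefactors so that the lifts of a basis of $\overbar{C}$ generate a group of order $q^{2k}$ containing no non-trivial scalar; this is the one genuinely delicate point, since the consistency of the phases (and the parity of $q$) must be checked, but it is exactly the content of the stabiliser construction. Define the quantum code $Q$ to be the joint $+1$-eigenspace of $S$; a standard projector-counting (character) argument gives $\dim_{\C}Q=q^n/|S|=q^{\,n-2k}$, so $Q$ is an $[\![ n,n-2k,d]\!]_q$ code for some $d$.

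Finally I would identify $d$. An error $E\in\mathcal{P}_n$ acts as a scalar on $Q$ iff $E$ lies in $S\cdot Z(\mathcal{P}_n)$, and it is undetectable iff it lies in the centraliser $C_{\mathcal{P}_n}(S)$ but not in $S\cdot Z(\mathcal{P}_n)$; by the dictionary above this corresponds precisely to vectors of $C^{\perp_h}\setminus C$. Hence, when $k\neq\tfrac12 n$, the minimum distance equals the minimum Hamming weight of the elements of $C^{\perp_h}\setminus C$. When $k=\tfrac12 n$ one has $C=C^{\perp_h}$, so $C_{\mathcal{P}_n}(S)=S\cdot Z(\mathcal{P}_n)$ and $\dim_{\C}Q=1$; the code then has no non-trivial logical operators and, by the usual convention, its distance is the minimum weight of a non-zero word of $C^{\perp_h}=C$. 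The main obstacle is the phase-consistency step in lifting the isotropic subspace to a scalar-free abelian group; the rest is bookkeeping with the symplectic dictionary and the dimension count.
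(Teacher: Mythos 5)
The paper does not actually prove this statement: it is quoted verbatim from Ketkar, Klappenecker, Kumar and Sarvepalli \cite[Corollary 19]{KKKS2006}, so there is no in-paper proof to compare against. Your sketch is essentially the standard stabiliser-formalism argument from that reference, and its architecture is sound: pass to $\F_q^{2n}$, identify commutation of generalised Pauli operators with vanishing of a trace-symplectic form, lift the isotropic image of $C$ to a scalar-free abelian stabiliser $S$ of order $q^{2k}$, count dimensions to get $q^{n-2k}$, and read off the distance from the centraliser modulo $S\cdot Z(\mathcal{P}_n)$. The $k=\tfrac12 n$ case and the weight-preservation of the dictionary are handled correctly.

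One step of your dictionary is wrong as stated, though, and it is the step the whole translation rests on. You claim the $\F_q$-linear bijection $\F_{q^2}^n\to\F_q^{2n}$ carries $\mathrm{Tr}_{\F_{q^2}/\F_q}\bigl((u,v)_h\bigr)$ to a scalar multiple of the symplectic form. But $\mathrm{Tr}_{\F_{q^2}/\F_q}(u\cdot v^q)=u\cdot v^q+u^q\cdot v$ is a \emph{symmetric} bilinear form, and in odd characteristic a nonzero symmetric form is never a scalar multiple of an alternating one, so this identification fails except when $q$ is even. The form that actually corresponds to Pauli commutation is the trace-\emph{alternating} form, essentially $\mathrm{Tr}_{q/p}\bigl((u\cdot v^q-u^q\cdot v)/(\beta^{2q}-\beta^2)\bigr)$ (and it is only $\F_p$-bilinear, not $\F_q$-bilinear, since the commutator phases are $p$-th roots of unity). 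The correct lemma, which is what Ketkar et al.\ prove, is that $(u,v)_h=0$ implies $\langle u,v\rangle_a=0$, and that for an $\F_{q^2}$-\emph{linear} code $D$ the two duals coincide, $D^{\perp_h}=D^{\perp_a}$; the linearity is essential here, because one recovers the Hermitian orthogonality from the alternating one by running over all $\F_{q^2}$-scalar multiples of $u$. With that substitution your conclusions (isotropy of $\overbar{C}$, identification of the symplectic dual with the image of $C^{\perp_h}$, and hence the distance statement) go through; without it the argument does not establish that undetectable errors correspond to $C^{\perp_h}\setminus C$ rather than to some other set. The phase-consistency of the lift, which you flag yourself, is the other point that a complete proof must supply.
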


If $C \subseteq  C^{\perp_h}$ then we say the linear code $C$ is {\em Hermitian self-orthogonal}. Theorem~\ref{sigmaortog} is our motivation to study Hermitian self-orthogonal codes. We can multiply the $i$-th coordinate of all the elements of $C$ by a non-zero scalar $\theta_i$, without altering the parameters of the code. Such a scaling, together with a reordering of the coordinates, gives a code which is said to be {\em linearly equivalent} or {\em monomially equivalent} to $C$. 

A linear code $D$ is {\em linearly equivalent} to a linear code $C$ over ${\mathbb F}_q$ if, after a suitable re-ordering of the coordinates, there exist non-zero $\theta_i \in {\mathbb F}_q$ such that 
$$
D=\{ (\theta_1u_1,\ldots,\theta_n u_n) \ | \ (u_1,\ldots,u_n) \in C\}.
$$
A {\em truncation} of a code is a code obtained from $C$ by deletion of coordinates. 

In this article we consider the generalised Reed-Solomon code, any code which is linearly equivalent to a Reed-Solomon code. 

In Section~\ref{section3} we will prove that there exists a $k$-dimensional Hermitian self-orthogonal generalised Reed-Solomon code of length $n \leqslant q^2$ if and only if there is a polynomial $g \in {\mathbb F}_{q^2}$ of degree at most $(q-k)q-1$ such that $g+g^q$ has $q^2-n$ distinct zeros. We go on to give examples of such polynomials $g$, which imply the existence of $k$-dimensional Hermitian self-orthogonal generalised Reed-Solomon codes of length $n$, for many values of $n$ which were previously unknown.

In Section~\ref{section4} we determine the minimum $n$ for which there exists a Hermitian self-orthogonal generalised Reed-Solomon code of length $n$, verifying a conjecture of Grassl and R\"otteler from \cite{GR2015}.

In Section~\ref{qsqplusone}, for $q$ an odd power of two, we provide an example of a polynomial $g$ of degree less than $q-1$ such that $g+g^q$ has no zeros. This implies, applying Theorem~\ref{hermortogRS}, that there is a $(q-1)$-dimensional Hermitian self-orthogonal generalised Reed-Solomon code of length $q^2+1$ when $q=2^{2h+1}$. This was previously unknown for $h\geqslant 4$.

\section{Hermitian self-orthogonal codes} \label{puncturecodesection}

In this section we introduce the puncture code $P(C)$ of a linear code $C$ and explain its connection to Hermitian self-orthogonal codes.

Let $C$ be a linear code of length $n$ over ${\mathbb F}_{q^2}$. The code $C$ is linearly equivalent to a Hermitian self-orthogonal code if and only if there are non-zero $\theta_i \in {\mathbb F}_{q^2}$ such that
\begin{equation} \label{hsodef}
\sum_{i=1}^n \theta_{i}^{q+1}u_{i}v_{i}^q=0,
\end{equation}
for all $u,v \in C$. Note that $\theta_{i}^{q+1}$ is a non-zero element of ${\mathbb F}_q$, so equivalently 
$C$ is linearly equivalent to a Hermitian self-orthogonal code if and only if there are non-zero $\lambda_i \in {\mathbb F}_{q}$ such that
$$
\sum_{i=1}^n \lambda_{i} u_{i}v_{i}^q=0.
$$

For any linear code $C$ over ${\mathbb F}_{q^2}$ of length $n$, Rains \cite{Rains1999} defined the {\em puncture code} $P(C)$ to be
\begin{equation} \label{puncdef}
P(C)=\{\lambda=(\lambda_{1},\ldots,\lambda_n) \in {\mathbb F}_q^n \ | \  \sum_{{i}=1}^n \lambda_{i} u_{i}v_{i}^q=0, \ \mathrm{for} \ \mathrm{all} \ u,v \in C \}.
\end{equation}
Then, clearly we have the following theorem.

\begin{theorem} \label{puncturecodethm}
Let $C$ be a linear code over ${\mathbb F}_{q^2}$ of length $n$. There is a truncation of $C$ to a linear code over ${\mathbb F}_{q^2}$ of length $r \leqslant n$ which is linearly equivalent to a Hermitian self-orthogonal code if and only if there is an element of $P(C)$ of weight $r$.
\end{theorem}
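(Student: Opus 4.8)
The plan is to read off the statement directly from the characterisation recorded just before~\eqref{puncdef}: a linear code over $\mathbb{F}_{q^2}$ is linearly equivalent to a Hermitian self-orthogonal code precisely when there are non-zero $\theta_i\in\mathbb{F}_{q^2}$ making \eqref{hsodef} hold, equivalently non-zero $\lambda_i\in\mathbb{F}_q$ with $\sum\lambda_iu_iv_i^q=0$ for all $u,v$ in the code. The only ingredient that is not purely formal is that the norm map $\mathbb{F}_{q^2}^\ast\to\mathbb{F}_q^\ast$, $x\mapsto x^{q+1}$, is surjective: it is a group homomorphism whose image has the same size $q-1=(q^2-1)/(q+1)$ as $\mathbb{F}_q^\ast$, so it is onto. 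Granting this, both implications reduce to translating between a weight-$r$ element of $P(C)$ and the set of $r$ coordinates supporting a truncation.

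For one direction, suppose $\lambda\in P(C)$ has weight $r$ and put $S=\supp(\lambda)$, so $|S|=r$. For each $i\in S$ pick $\theta_i\in\mathbb{F}_{q^2}^\ast$ with $\theta_i^{q+1}=\lambda_i$, possible by surjectivity of the norm. Let $C'$ be the truncation of $C$ to the coordinates in $S$, whose elements are exactly the tuples $(u_i)_{i\in S}$ with $u\in C$. Then for all $u,v\in C$,
$$
\sum_{i\in S}\theta_i^{q+1}u_iv_i^q=\sum_{i\in S}\lambda_iu_iv_i^q=\sum_{i=1}^n\lambda_iu_iv_i^q=0,
$$
the last step using $\lambda_i=0$ for $i\notin S$ and $\lambda\in P(C)$. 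By the characterisation above, $C'$ — a code of length $r$ — is linearly equivalent to a Hermitian self-orthogonal code, the scaling by the $\theta_i$ exhibiting the equivalence.

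For the converse, suppose there is a set $S$ of $r$ coordinates such that the truncation $C'$ of $C$ to $S$ is linearly equivalent to a Hermitian self-orthogonal code. Applying the characterisation to $C'$ produces non-zero $\theta_i\in\mathbb{F}_{q^2}$, $i\in S$, with $\sum_{i\in S}\theta_i^{q+1}u_iv_i^q=0$ for all $u,v\in C'$, hence for all $u,v\in C$. Define $\lambda\in\mathbb{F}_q^n$ by $\lambda_i=\theta_i^{q+1}\in\mathbb{F}_q^\ast$ for $i\in S$ and $\lambda_i=0$ otherwise; then $\sum_{i=1}^n\lambda_iu_iv_i^q=\sum_{i\in S}\theta_i^{q+1}u_iv_i^q=0$ for all $u,v\in C$, so $\lambda\in P(C)$ with $\wt(\lambda)=|S|=r$. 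There is essentially no obstacle: the only point worth a sentence is the surjectivity of the norm map, and the rest is the bookkeeping identification of supports with sets of surviving coordinates.
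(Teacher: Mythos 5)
Your proof is correct and follows exactly the route the paper intends: the paper states this theorem with no written proof (``clearly we have the following theorem''), relying on the observation made just before the definition of $P(C)$ that one can pass between non-zero $\theta_i\in\mathbb{F}_{q^2}$ and non-zero $\lambda_i=\theta_i^{q+1}\in\mathbb{F}_q$ via the surjectivity of the norm map. You have simply written out that bookkeeping in full, including the one non-formal ingredient (norm surjectivity), and both directions are handled correctly.
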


Thus, as emphasised in \cite{GR2015}, the puncture code is an extremely useful tool in constructing Hermitian self-orthogonal codes. Observe that the minimum distance of any quantum code, given by an element in the puncture code, will have minimum distance at least the minimum distance of $C^{\perp}$. This follows since any element in the dual of the truncated code will be an element of $C^{\perp}$ if we replace the deleted coordinates with zeros. 

 \section{Hermitian self-orthogonal generalised Reed-Solomon codes} \label{section3}

In this section we focus on the puncture code of the Reed-Solomon code. We will prove that the puncture code can be obtained as the evaluation code of polynomials which belong to a specified subspace (\ref{Udefn}). This leads to the particularly useful Theorem~\ref{hermortogRS2}. This theorem gives necessary and sufficient conditions on the existence of a truncation of a Reed-Solomon code being linearly equivalent to a Hermitian self-orthogonal code. This equivalence is in terms of the existence of a polynomial with certain properties. These properties bound the degree of the polynomial and the number of trace zero evaluations that it has. Here, the trace refers to the standard trace function from ${\mathbb F}_{q^2}$ to ${\mathbb F}_q$. Finally, we give examples of such polynomials and therefore truncations of the Reed-Solomon code to codes which are linearly equivalent to Hermitian self-orthogonal codes.

Throughout the article $\{a_1,\ldots,a_{q^2}\}$ will denote the set of elements of ${\mathbb F}_{q^2}$. 

A {\em generalised Reed-Solomon code} over ${\mathbb F}_{q^2}$ is
$$
D=\{(\theta_1f(a_1),\ldots,\theta_{q^2}f(a_{q^2}),\theta_{q^2+1}f_{k-1}) \ | \ f \in {\mathbb F}_{q^2}[X],\ \deg f \leqslant k-1\},
$$
where $f_i$ denotes the coefficient of $X^i$ in $f(X)$ and $\theta_i \in {\mathbb F}_{q^2} \setminus \{0\}$. 

The Reed-Solomon code over ${\mathbb F}_{q^2}$ 
$$
C=\{(f(a_1),\ldots,f(a_{q^2}),f_{k-1}) \ | \ f \in {\mathbb F}_{q^2}[X],\ \deg f \leqslant k-1\},
$$
is obtained from the above definition by setting $\theta_i=1$ for all $i \in \{1,\ldots,q^2+1\}$. Thus, a generalised Reed-Solomon code, up to permutation of the coordinates, simply describes all linear codes linearly equivalent to the Reed-Solomon code $C$.

We note that our definition of a Reed-Solomon code, and its generalised version, is what some authors call the extended or doubly extended Reed-Solomon code. That is, many authors do not include the final coordinate or the evaluation at zero. 

A generalised Reed-Solomon code is an example of a {\em maximum distance separable} code (MDS code). By definition, MDS codes are those codes attaining the Singleton bound which, for linear $[n,k,d]$ codes, is $k \leqslant n-d+1$.

By (\ref{hsodef}), the Reed-Solomon code $C$ (or its truncation if some of the $\theta_i$ are zero) is linearly equivalent to a Hermitian self-orthogonal code if and only if 
\begin{equation} \label{eqn13}
\theta_{q^2+1}^{q+1}f_{k-1}^qg_{k-1}+\sum_{i=1}^{q^2} \theta_i^{q+1} f(a_i)^q g(a_i)=0,
\end{equation}
for all polynomials $f,g \in {\mathbb F}_{q^2}[X]$ of degree at most $k-1$.

Equivalently, according to (\ref{puncdef}),
\begin{equation} \label{pccondition}
(\theta_1^{q+1},\ldots,\theta_{q^2}^{q+1},\theta_{q^2+1}^{q+1}) \in P(C).
\end{equation}

Thus, to determine all truncations of a generalised Reed-Solomon code which are Hermitian self-orthogonal, it suffices to determine the puncture code $P(C)$ of the Reed-Solomon code. In the following theorem we prove that $P(C)$ is the evaluation code of the ${\mathbb F}_q$-subspace 
\begin{equation} \label{Udefn}
U=\left\{ h \in {\mathbb F}_{q^2}[X] \ | \  h(X)=\sum_{i=0}^{q-k-1} \sum_{j=i+1}^{q-1}( h_{ij}X^{iq+j}+h_{ij}^q X^{jq+i})+\sum_{i=0}^{q-k} h_{i}X^{i(q+1)} \right\},
\end{equation}
where $h_{ij} \in {\mathbb F}_{q^2}$ and $h_{i} \in {\mathbb F}_q$. 

Observe that $U$ is a subspace over ${\mathbb F}_q$ since $h,g \in U$ implies $g+h \in U$ and $\lambda h \in U$ for all $\lambda \in {\mathbb F}_q$. 

The size of $U$ is
$$
q^{2((q-1)(q-k)-\frac{1}{2}(q-k-1)(q-k))}q^{q-k+1}=q^{q^2-k^2+1}.
$$
Hence, the dimension of $U$, as a subspace over ${\mathbb F}_q$, is $q^2+1-k^2$.

It was proven in \cite[Theorem 5]{Ball2021} that if $k \geqslant q+1$ then for $C$, the $k$-dimensional Reed-Solomon code, $P(C)=\{0\}$.
 
\begin{theorem} \label{RSpunc}
If $k \leqslant q$ and $C$ is a $k$-dimensional Reed-Solomon code then
$$
P(C)=\{(h(a_1),\ldots,h(a_{q^2}),h_{q-k}) \ | \ h \in U\},
$$
where $U$ is defined as in (\ref{Udefn}). In particular, we have that $\dim P(C)=q^2+1-k^2$.
\end{theorem}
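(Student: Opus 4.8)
The plan is to realise $P(C)$ as the explicit evaluation image of the subspace $U$, using throughout the classical fact that a polynomial over $\mathbb{F}_{q^2}$ of degree less than $q^2$ is determined by its values on $\mathbb{F}_{q^2}$. Every monomial appearing in (\ref{Udefn}) has degree at most $(q-k)(q+1)\leqslant q^2-1$, so the map $h\mapsto(h(a_1),\ldots,h(a_{q^2}))$ is injective on $U$; hence once the displayed equality is established, the dimension statement follows from $|U|=q^{q^2+1-k^2}$ computed above. To prove the equality, take $\lambda\in\mathbb{F}_q^{q^2+1}$ and let $\ell(X)=\sum_{e=0}^{q^2-1}\ell_eX^e$ be the interpolating polynomial with $\ell(a_i)=\lambda_i$. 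By (\ref{puncdef}) applied to $C$, $\lambda\in P(C)$ if and only if $\sum_{i=1}^{q^2}\lambda_if(a_i)g(a_i)^q+\lambda_{q^2+1}f_{k-1}g_{k-1}^q=0$ for all $f,g\in\mathbb{F}_{q^2}[X]$ of degree at most $k-1$. Expanding $f=\sum_sf_sX^s$, $g=\sum_tg_tX^t$, replacing $\lambda_i=\ell(a_i)$ and using that $\{a_1,\ldots,a_{q^2}\}=\mathbb{F}_{q^2}$, the left-hand side becomes the bilinear form $\sum_{s,t=0}^{k-1}f_sg_t^q\bigl(\sum_{x\in\mathbb{F}_{q^2}}\ell(x)x^{s+tq}\bigr)$ with an extra $\lambda_{q^2+1}$ added in the slot $(s,t)=(k-1,k-1)$; since $g_t\mapsto g_t^q$ is a bijection of $\mathbb{F}_{q^2}$, it vanishes for all $f,g$ if and only if $\sum_{x\in\mathbb{F}_{q^2}}\ell(x)x^{s+tq}=0$ for all $(s,t)\neq(k-1,k-1)$ in $\{0,\ldots,k-1\}^2$ and $\sum_{x\in\mathbb{F}_{q^2}}\ell(x)x^{(k-1)(q+1)}=-\lambda_{q^2+1}$.

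Next I would carry out the power-sum computation. Using $\sum_{x\in\mathbb{F}_{q^2}}x^m=-1$ when $(q^2-1)\mid m$ and $m\geqslant1$, and $0$ otherwise, exactly one term of $\ell$ survives and $\sum_{x}\ell(x)x^{s+tq}=-\ell_{q^2-1-s-tq}$ for all $0\leqslant s,t\leqslant k-1$ (when $k=q$ the slot $(s,t)=(q-1,q-1)$ acquires one further term $-\ell_{q^2-1}$, which is harmless because it is killed by the condition at $(s,t)=(0,0)$). Writing $q^2-1-s-tq=(q-1-t)q+(q-1-s)$ and $q^2-1-(k-1)(q+1)=(q-k)(q+1)$, the condition $\lambda\in P(C)$ is equivalent to: $\ell_{aq+b}=0$ for every $(a,b)$ with $q-k\leqslant a,b\leqslant q-1$ other than $(a,b)=(q-k,q-k)$, together with $\ell_{(q-k)(q+1)}=\lambda_{q^2+1}$. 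Encoding the exponent $aq+b$ of a monomial by its base-$q$ digit pair $(a,b)$, this says that the support of $\ell$ is contained in $\{(a,b):\min(a,b)\leqslant q-k-1\}\cup\{(q-k,q-k)\}$, which is precisely the set of exponents occurring in (\ref{Udefn}), and that $\lambda_{q^2+1}$ equals the coefficient of $X^{(q-k)(q+1)}$ in $\ell$.

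Finally I would bring in the hypothesis $\lambda_i\in\mathbb{F}_q$: this forces $\ell(x)^q=\ell(x)$ for every $x\in\mathbb{F}_{q^2}$, and since $x^{(aq+b)q}=x^{bq+a}$ as a function on $\mathbb{F}_{q^2}$, comparing $\ell$ with the polynomial of degree less than $q^2$ obtained from it by raising each coefficient to the $q$-th power and transposing the base-$q$ digits of each exponent yields $\ell_{bq+a}^q=\ell_{aq+b}$ for all $a,b$, hence in particular $\ell_{i(q+1)}\in\mathbb{F}_q$. Combined with the support description of the previous paragraph, this is exactly the statement that $\ell$ has the form (\ref{Udefn}) with $h_{ij}=\ell_{iq+j}$ and $h_i=\ell_{i(q+1)}$; thus $\ell\in U$ and $\lambda=(\ell(a_1),\ldots,\ell(a_{q^2}),\ell_{q-k})$, giving $P(C)\subseteq\{(h(a_1),\ldots,h(a_{q^2}),h_{q-k}):h\in U\}$. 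The reverse inclusion is the same computation run backwards: for $h\in U$ the monomials $X^{aq+b}$ with $\min(a,b)\geqslant q-k$ other than $(q-k,q-k)$ are absent, so the power-sum identity gives $\sum_{i=1}^{q^2}h(a_i)f(a_i)g(a_i)^q=-h_{q-k}f_{k-1}g_{k-1}^q$ for all $f,g$ of degree at most $k-1$, and since the conjugate symmetry of the coefficients of $h$ makes $h(a_i)\in\mathbb{F}_q$, this says exactly that $(h(a_1),\ldots,h(a_{q^2}),h_{q-k})\in P(C)$. The main obstacle is purely bookkeeping: checking that the exponents $q^2-1-s-tq$, the wrap-around in the power-sum formula, and the digit transpose $aq+b\mapsto bq+a$ all line up with the precise index ranges in (\ref{Udefn}), and handling cleanly the degenerate cases $k=1$ and $k=q$, in which the exceptional pair $(k-1,k-1)$ collides with the constant-term slot.
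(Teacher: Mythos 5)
Your argument is correct and is essentially the paper's proof: both hinge on the same power-sum identity $\sum_{x\in\mathbb{F}_{q^2}}x^m=-1$ precisely when $m\geqslant 1$ and $(q^2-1)\mid m$, applied to monomial $f,g$, the only reorganisation being that you interpolate an arbitrary $\ell$ of degree less than $q^2$ and deduce the conjugate symmetry $\ell_{aq+b}^q=\ell_{bq+a}$ from $\mathbb{F}_q$-valuedness at the end, whereas the paper first shows by a counting argument that the $\mathbb{F}_q$-valued functions are exactly the evaluations of the symmetric-form polynomials (\ref{hdef}) and then kills the high-index coefficients. One harmless slip: the monomials $X^{jq+i}$ in (\ref{Udefn}) have degree up to $q^2-k-1$, which exceeds your stated bound $(q-k)(q+1)$ once $k\geqslant 2$, but this is still below $q^2$, so the injectivity of evaluation on $U$ and hence your dimension count are unaffected.
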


\begin{proof}
Firstly we verify that all functions from ${\mathbb F}_{q^2}$ to ${\mathbb F}_q$ are evaluations of polynomials of the form 
\begin{equation} \label{hdef}
 h(X)=\sum_{i=0}^{q-2} \sum_{j=i+1}^{q-1}( h_{ij}X^{iq+j}+h_{ij}^q X^{jq+i})+\sum_{i=0}^{q-1} h_{i}X^{i(q+1)},
\end{equation}
where $h_{ij} \in {\mathbb F}_{q^2}$ and $h_{i} \in {\mathbb F}_q$. 

Note that $h(x) \in {\mathbb F}_{q}$ for all $x \in {\mathbb F}_{q^2}$ and there are 
$$
q^{2{{q-1\choose 2}}}q^q=q^{q^2}
$$ 
polynomials of this form. Each defines a distinct function from ${\mathbb F}_{q^2}$ to ${\mathbb F}_q$ and since there are $q^{q^2}$ such functions, the evaluation of such polynomials describes all of them.

The condition (\ref{pccondition}) 
$$
(h(a_1),\ldots,h(a_{q^2}),c) \in P(C)
$$
is equivalent to condition (\ref{eqn13}), which in this case is
$$
c f_{k-1}^q g_{k-1}+\sum_{\ell =1}^{q^2}h(a_\ell ) f(a_\ell)^qg(a_\ell)=0,
$$
for all polynomials $f,g \in {\mathbb F}_{q^2}[X]$, where $\deg f, \deg g \leqslant k-1$,

Substituting, $f(X)=X^r$ and $g(X)=X^s$, where $s<r \leqslant k-1$, this becomes
$$
\sum_{\ell=1}^{q^2}h(a_\ell) a_\ell^{rq+s}=0.
$$
Thus, from (\ref{hdef}),
$$
\sum_{\ell=1}^{q^2} \sum_{i=0}^{q-2} \sum_{j=i+1}^{q-1}( h_{ij}a_{\ell}^{(i+r)q+j+s}+h_{ij}^q a_{\ell}^{(j+r)q+i+s})+\sum_{\ell=1}^{q^2} \sum_{i=0}^{q-1} h_{i}a_{\ell}^{(i+r)q+i+s}=0.
$$
The only term in these sums whose exponent is $q^2-1$ is $h_{q-1-r,q-1-s} a_\ell^{q^2-1}$. 

Using the fact that 
$$
\sum_{t \in {\mathbb F}_{q^2}} t^i=0,
$$ 
for all $i=0,\ldots,q^2-2$ and 
$$
\sum_{t \in {\mathbb F}_{q^2}} t^{q^2-1}=-1,
$$ 
we have that 
$$
h_{ij}=0
$$
for $i \geqslant q-k$ and $j\geqslant i+1$. 

Similarly, substituting $f(X)=X^r$ and $g(X)=X^r$, where $r \leqslant k-2$ implies $h_i=0$ for $i \geqslant q-k+1$. And substituting $f(X)=X^{k-1}$ and $g(X)=X^{k-1}$, we conclude that $c=h_{q-k}$.

Thus, we have proved that
$$
P(C) \subseteq C_U=\{(h(a_1),\ldots,h(a_{q^2}),h_{q-k}) \ | \ h \in U\}.
$$
To prove equality, suppose
$$
f(a_\ell)=\sum_{r=0}^{k-1} f_r a_\ell^r\ \ \mathrm{and} \  \ g(a_\ell)=\sum_{s=0}^{k-1} g_ s a_\ell^s.
$$ 
The sum
$$
h_{q-k} f_{k-1}^qg_{k-1}+\sum_{\ell=1}^{q^2} \sum_{r,s=0}^{k-1} \sum_{i=0}^{q-k-1} \sum_{j=i+1}^{q-1} f_r^q g_s (h_{ij} a_\ell^{(i+r)q+s+j}+h_{ij}^q a_\ell^{(r+j)q+s+i})
$$
$$
+\sum_{\ell=1}^{q^2}\sum_{i=0}^{q-k}  f_r^q g_sh_{i}  a_\ell^{(i+r)q+s+i} 
$$
is zero, since the only term in the sums whose exponent is $q^2-1$ is the term in the last sum when $r=s=k-1$ and $i=q-k$. Thus, we have that this sum is
$$
h_{q-k}f_{k-1}^qg_{k-1}-h_{q-k}f_{k-1}^qg_{k-1}=0.
$$
Hence, $C_U \subseteq P(C)$. 

The dimension of $P(C)$ follows from the fact that $\dim U=q^2+1-k^2$.
\end{proof}

Theorem~\ref{RSpunc} has the following corollary.

\begin{theorem} \label{hermortogRS}
Suppose $k \leqslant q-1$. There is a linear $[n,k,n-k+1]_{q^2}$ Hermitian self-orthogonal truncated generalised Reed-Solomon code if and only if there is a polynomial 
$$
 h(X)=\sum_{i=0}^{q-k-1} \sum_{j=i+1}^{q-1}( h_{ij}X^{iq+j}+h_{ij}^q X^{jq+i})+\sum_{i=0}^{q-k-1} h_{i}X^{i(q+1)}+X^{(q-k)(q+1)},
$$
which has $q^2+1-n$ distinct zeros when evaluated at $x \in {\mathbb F}_{q^2}$, or a polynomial $h(X) \in {\mathbb F}_{q^2}[X]$ of the form
$$
 h(X)=\sum_{i=0}^{q-k-1} \sum_{j=i+1}^{q-1}( h_{ij}X^{iq+j}+h_{ij}^q X^{jq+i})+\sum_{i=0}^{q-k-1} h_{i}X^{i(q+1)}
$$
which has $q^2-n$ distinct zeros when evaluated at $x \in {\mathbb F}_{q^2}$.
\end{theorem}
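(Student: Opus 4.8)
The plan is to deduce Theorem~\ref{hermortogRS} from Theorem~\ref{RSpunc} by translating statements about weights of elements of the puncture code into statements about zeros of polynomials in the subspace $U$. By Theorem~\ref{puncturecodethm}, a truncation of the Reed-Solomon code $C$ to length $n$ which is linearly equivalent to a Hermitian self-orthogonal code exists if and only if $P(C)$ contains an element of weight exactly $n$; since the truncated generalised Reed-Solomon code is MDS (it inherits the MDS property of $C$), such a code automatically has parameters $[n,k,n-k+1]_{q^2}$, which disposes of the minimum-distance claim. So the whole content is: $P(C)$ has an element of weight $n$ if and only if one of the two displayed polynomial forms has the stated number of zeros.

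First I would invoke Theorem~\ref{RSpunc} to write a generic element of $P(C)$ as $(h(a_1),\ldots,h(a_{q^2}),h_{q-k})$ for $h \in U$. The weight of this vector is $n$ precisely when, among the $q^2+1$ coordinates, exactly $q^2+1-n$ vanish. Here I would split into two cases according to whether the final coordinate $h_{q-k}$ is zero or not, since the last coordinate plays a distinguished role (it is the coefficient $h_{q-k} \in \F_q$ of $X^{(q-k)(q+1)}$, and it is not an evaluation at any $a_\ell$). Case one: $h_{q-k} \neq 0$. Then after scaling by an element of $\F_q^\times$ (which is legitimate because $P(C)$ is an $\F_q$-subspace and scaling does not change the weight) we may assume $h_{q-k}=1$, so $h$ has the first displayed form with leading term $X^{(q-k)(q+1)}$; the last coordinate is nonzero, so weight $n$ means exactly $q^2+1-n$ of the $q^2$ evaluations $h(a_\ell)$ are zero, i.e.\ $h$ has $q^2+1-n$ distinct zeros in $\F_{q^2}$. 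Case two: $h_{q-k}=0$. Then $h$ has the second displayed form, the last coordinate is $0$, and weight $n$ means exactly $q^2-n$ of the evaluations vanish, i.e.\ $h$ has $q^2-n$ distinct zeros in $\F_{q^2}$. Conversely, any $h$ of either form lies in $U$ (for the first form, $U$ is a coset-like set but the point is that $X^{(q-k)(q+1)}$ is the $i=q-k$ term allowed in the sum $\sum_{i=0}^{q-k} h_i X^{i(q+1)}$ of (\ref{Udefn}) with $h_{q-k}=1$, and the remaining terms match $U$ exactly), so the corresponding vector lies in $P(C)$ and, counting a zero/nonzero last coordinate as above, has weight $n$; by Theorem~\ref{puncturecodethm} this gives the desired truncated code.

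**The only real subtlety** is bookkeeping of the last coordinate and checking that the two displayed polynomial forms together parametrise exactly the weight-$n$ elements of $P(C)$ up to $\F_q^\times$-scaling: one must be careful that in the $h_{q-k}\neq 0$ case the normalisation $h_{q-k}=1$ is harmless (it is, by the scaling remark), and that the count "$q^2+1-n$ zeros among the evaluations, last coordinate nonzero" really does give total weight $n=(q^2-(q^2+1-n))+1$. There is no deep obstacle here; this is genuinely a corollary, and the proof is a short case analysis. I would present it as: apply Theorem~\ref{RSpunc}; use that $P(C)$ is closed under $\F_q^\times$-scaling to normalise; split on whether $h_{q-k}$ vanishes; in each case equate "weight $n$" with the asserted number of distinct zeros; and note that MDS-ness gives the minimum distance $n-k+1$ for free.
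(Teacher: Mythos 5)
Your proposal is correct and follows essentially the same route as the paper: the paper's proof is a one-line appeal to Theorem~\ref{RSpunc} and the definition of $U$, splitting on whether the coefficient of $X^{(q-k)(q+1)}$ vanishes and normalising it to $1$ by an $\mathbb{F}_q^{\times}$-scaling when it does not. Your more explicit bookkeeping of the last coordinate and the weight count is exactly what the paper leaves implicit.
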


\begin{proof}
This follows directly from Theorem~\ref{RSpunc} and the definition of $U$. The two cases depend on whether $h(X)$ has a term of degree $(q-k)(q+1)$ or not. If it does then we can scale $h(X)$ so that the coefficient of $X^{(q-k)(q+1)}$ is one.
\end{proof}

In the following theorem we prove that the subspace $U$, as a subspace of functions from ${\mathbb F}_{q^2}$ to ${\mathbb F}_{q}$, has an alternative and more useful description. Specifically, the functions defined by the polynomials $h$ can be obtained from polynomials of small degree as specified in the following theorem.

\begin{theorem} \label{RSpunc2}
If $k \leqslant q$ and $C$ is a $k$-dimensional Reed-Solomon code then
$$
P(C)=\{(g(a_1)+g(a_1)^q+ca_1^{(q-k)(q+1)},\ldots,g(a_{q^2})+g(a_{q^2})^q+ca_{q^2}^{(q-k)(q+1)},c) 
$$
$$
\ | \ g \in {\mathbb F}_{q^2}[X], \deg g \leqslant (q-k)q-1,\ c \in {\mathbb F}_q\}.
$$
\end{theorem}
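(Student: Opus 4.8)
The plan is to push the description of $P(C)$ given by Theorem~\ref{RSpunc} --- as the evaluation code of the polynomial space $U$ of (\ref{Udefn}) --- through the substitution $x^{q^2}=x$ on $\mathbb{F}_{q^2}$, which should let me rewrite every member of $U$ as $g(X)+g(X)^q+cX^{(q-k)(q+1)}$ with $\deg g\leqslant(q-k)q-1$ and $c\in\mathbb{F}_q$, and conversely. For the inclusion of $P(C)$ in the right-hand side I would take $h\in U$ and separate its off-diagonal part $\sum_{i,j}(h_{ij}X^{iq+j}+h_{ij}^qX^{jq+i})$ from its diagonal part $\sum_{i=0}^{q-k}h_iX^{i(q+1)}$. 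The two facts to use are that, as functions on $\mathbb{F}_{q^2}$, $(X^{iq+j})^q$ agrees with $X^{jq+i}$ and $(X^{i(q+1)})^q$ agrees with $X^{i(q+1)}$. Hence, with $g_1=\sum_{i,j}h_{ij}X^{iq+j}$, the off-diagonal part agrees with $g_1+g_1^q$ and $\deg g_1\leqslant(q-k-1)q+(q-1)=(q-k)q-1$; fixing $\alpha\in\mathbb{F}_{q^2}$ with $\alpha+\alpha^q=1$ and setting $g_2=\alpha\sum_{i=0}^{q-k-1}h_iX^{i(q+1)}$, the terms $i\leqslant q-k-1$ of the diagonal part agree with $g_2+g_2^q$ (here one uses $h_i\in\mathbb{F}_q$), with $\deg g_2\leqslant(q-k-1)(q+1)<(q-k)q-1$; and the remaining term $h_{q-k}X^{(q-k)(q+1)}$ is carried along as the explicit summand with $c=h_{q-k}$. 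So $g=g_1+g_2$ has degree at most $(q-k)q-1$ and $h(x)=g(x)+g(x)^q+c\,x^{(q-k)(q+1)}$ for all $x$, with the last coordinate $h_{q-k}$ equal to $c$, which settles this inclusion.

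For the reverse inclusion, fix $g$ of degree at most $(q-k)q-1$ and $c\in\mathbb{F}_q$ and set $\phi(x)=g(x)+g(x)^q+c\,x^{(q-k)(q+1)}$. One checks $\phi(x)^q=\phi(x)$ for all $x\in\mathbb{F}_{q^2}$, using $c^q=c$ and $x^{q(q-k)(q+1)}=x^{(q-k)(q+1)}$ (the latter because $(q-k)(q+1)(q-1)=(q-k)(q^2-1)$), so $\phi$ takes values in $\mathbb{F}_q$. By the fact established inside the proof of Theorem~\ref{RSpunc}, that every function $\mathbb{F}_{q^2}\to\mathbb{F}_q$ is the evaluation of a unique polynomial of the form (\ref{hdef}), $\phi$ is the evaluation of a unique such $\tilde h$, and it remains to check $\tilde h\in U$ with $\tilde h_{q-k}=c$. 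For this I would reduce $g(X)+g(X)^q+cX^{(q-k)(q+1)}$ modulo $X^{q^2}-X$: writing each exponent $m\leqslant(q-k)q-1$ of $g$ in base $q$ as $m=\mu_1q+\mu_0$, which forces $\mu_1\leqslant q-k-1$, the reduction only involves the monomials $X^{\mu_1q+\mu_0}$ and $X^{\mu_0q+\mu_1}$ coming from $g$ and $g^q$, together with $X^{(q-k)(q+1)}$; the coefficients of a conjugate pair of monomials are automatically $q$-th powers of one another, the coefficients of the monomials $X^{i(q+1)}$ automatically lie in $\mathbb{F}_q$, and the monomial $X^{(q-k)(q+1)}$ cannot be produced by reducing $g$ or $g^q$ (that would require a base-$q$ digit at least $q-k$), so it occurs in $\tilde h$ precisely with coefficient $c$. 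Hence $\tilde h\in U$ and $\tilde h_{q-k}=c$, which gives the reverse inclusion.

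I expect the last step --- confirming that $\tilde h$ lies in $U$ and not merely in the larger space (\ref{hdef}) --- to be the main technical point, since it is exactly the bookkeeping on the base-$q$ digits of the exponents. If one prefers to avoid it, the reverse inclusion can instead be obtained by a dimension count: by Theorem~\ref{RSpunc} we have $\dim_{\mathbb{F}_q}P(C)=q^2+1-k^2$, so, having shown $P(C)$ is contained in the right-hand side, it suffices to show the $\mathbb{F}_q$-linear map $(g,c)\mapsto(\phi(a_1),\dots,\phi(a_{q^2}),c)$ has image of $\mathbb{F}_q$-dimension at most $q^2+1-k^2$, equivalently kernel of dimension at least $(q-k)^2$. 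One checks that $c=0$ is forced in the kernel (the function $x\mapsto x^{(q-k)(q+1)}$ is not of the form $x\mapsto g(x)+g(x)^q$ with $\deg g\leqslant(q-k)q-1$, by the same digit argument), and that $\{g:\deg g\leqslant(q-k)q-1,\ g(x)+g(x)^q=0\ \forall x\}$ has $\mathbb{F}_q$-dimension $(q-k)+2\binom{q-k}{2}=(q-k)^2$ (one dimension for each of the $q-k$ diagonal coefficients, constrained to the kernel of the trace, and two for each of the $\binom{q-k}{2}$ conjugate pairs of off-diagonal coefficients, the rest being forced to vanish). Since the domain has $\mathbb{F}_q$-dimension $2(q-k)q+1$, the image has dimension $2(q-k)q+1-(q-k)^2=q^2+1-k^2$, which closes the proof.
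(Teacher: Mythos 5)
Your proposal is correct, and its main line is essentially the paper's own argument: for the forward inclusion you split $h\in U$ into its off-diagonal and diagonal parts and take $g$ to be the ``lower half'' of the off-diagonal sum plus trace-preimages of the diagonal coefficients (the paper chooses $g_i$ with $g_i+g_i^q=h_i$, which is your $g_i=\alpha h_i$), and for the reverse inclusion you reduce $g+g^q+cX^{(q-k)(q+1)}$ modulo $X^{q^2}-X$ and check, via the base-$q$ digits of the exponents, that the conjugate-pair coefficients are $q$-th powers of one another, the diagonal coefficients lie in ${\mathbb F}_q$, and no digit $\geqslant q-k$ can arise --- which is exactly the paper's index-swapping computation written in different notation. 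The one genuine addition is your alternative dimension count for the reverse inclusion: using $\dim P(C)=q^2+1-k^2$ from Theorem~\ref{RSpunc} and computing that the kernel of $(g,c)\mapsto(\phi(a_1),\ldots,\phi(a_{q^2}),c)$ has ${\mathbb F}_q$-dimension $(q-k)^2$ (your count $2\binom{q-k}{2}+(q-k)$ is right, and $c=0$ is forced trivially since $c$ is itself the last coordinate, so the parenthetical digit argument there is not needed). This buys a way to avoid the final bookkeeping that $\tilde h$ lands in $U$ rather than merely in the space (\ref{hdef}), at the cost of leaning on the dimension formula; the paper's direct computation is self-contained but is exactly the bookkeeping you identify as the main technical point.
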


\begin{proof}
We have to show that for each $h \in U$ there is a $g \in {\mathbb F}_{q^2}[X]$, where $\deg g \leqslant (q-k)q-1$, such that $h(X)$ and
$$
g(X)+g(X)^q+cX^{(q-k)(q+1)}
$$
define the same function, and vice-versa.

Suppose
$$
 h(X)=\sum_{i=0}^{q-k-1} \sum_{j=i+1}^{q-1}( h_{ij}X^{iq+j}+h_{ij}^q X^{jq+i})+\sum_{i=0}^{q-k} h_{i}X^{i(q+1)}.
$$
Define
$$
g(X)= \sum_{i=0}^{q-k-1} \sum_{j=i+1}^{q-1} h_{ij}X^{iq+j}+\sum_{i=0}^{q-k-1} g_{i}X^{i(q+1)},
$$
where $g_i+g_i^q=h_i$ for $ i \in \{0,\ldots,q-k-1\}$, and let $c=h_{q-k}$.

Then, 
$$
g(x)+g(x)^q+cx^{(q-k)(q+1)}=h(x),
$$ 
for all $x \in {\mathbb F}_{q^2}$.

Vice-versa, suppose
$$
g(X)= \sum_{i=0}^{q-k-1} \sum_{j=0}^{i-1} g_{ij}X^{iq+j}+ \sum_{i=0}^{q-k-1} \sum_{j=i+1}^{q-1} g_{ij}X^{iq+j}+\sum_{i=0}^{q-k-1} g_{i}X^{i(q+1)}
$$
and $c \in {\mathbb F}_q$.

For all $x \in {\mathbb F}_{q^2}$, switching the order of the sums in the first and third sums,
$$
g(x)+g(x)^q
=\sum_{j=0}^{q-k-2} \sum_{i=j+1}^{q-k-1} g_{ij}x^{iq+j}
+\sum_{i=0}^{q-k-1} \sum_{j=i+1}^{q-1} g_{ij}x^{iq+j}
$$
$$
+\sum_{j=0}^{q-k-2} \sum_{i=j+1}^{q-k-1} g_{ij}^qx^{jq+i}+\sum_{i=0}^{q-k-1} \sum_{j=i+1}^{q-1} g_{ij}^qx^{jq+i}+\sum_{i=0}^{q-k-1} (g_i+g_{i}^q)x^{i(q+1)}.
$$  
Since $g_{ij}=0$ for $i>j \geqslant q-k-1$ and for $i \geqslant q-k$,
$$
g(x)+g(x)^q
=\sum_{j=0}^{q-k-1} \sum_{i=j+1}^{q-1} g_{ij}x^{iq+j}
+\sum_{i=0}^{q-k-1} \sum_{j=i+1}^{q-1} g_{ij}x^{iq+j}
$$ 
$$
+\sum_{j=0}^{q-k-1} \sum_{i=j+1}^{q-1} g_{ij}^qx^{jq+i}+\sum_{i=0}^{q-k-1} \sum_{j=i+1}^{q-1} g_{ij}^qx^{jq+i}+\sum_{i=0}^{q-k-1} (g_i+g_{i}^q)x^{i(q+1)}.
$$  

$$
 =\sum_{i=0}^{q-k-1} \sum_{j=i+1}^{q-1}( (g_{ij}+g_{ji}^q)x^{iq+j}+(g_{ij}^q+g_{ji}) x^{jq+i})+\sum_{i=0}^{q-k-1} (g_{i}+g_i^q)x^{i(q+1)}
$$
Thus, we define
$$
 h(X)=\sum_{i=0}^{q-k-1} \sum_{j=i+1}^{q-1}( (g_{ij}+g_{ji}^q)X^{iq+j}+(g_{ij}^q+g_{ji}) X^{jq+i})+\sum_{i=0}^{q-k-1} (g_{i}+g_i^q)X^{i(q+1)}+cX^{(q-k)(q+1)}
$$
and conclude that
$$
g(x)+g(x)^q+c x^{(q-k)(q+1)}=h(x),
$$ 
for all $x \in {\mathbb F}_{q^2}$.
\end{proof}

If $k=q$ then the puncture code has dimension one and is spanned by the all-one vector and, as mentioned before, if $k \geqslant q+1$ then the puncture code is trivial. Thus, we can restrict to the case $k \leqslant q-1$.

The case in which $n=q^2+1$ will be dealt with separately in Section~\ref{qsqplusone}. In the case $n \leqslant q^2$ we can apply the description of the puncture code given in Theorem~\ref{RSpunc2}. This leads to the following theorem which gives a straightforward method to obtain Hermitian self-orthogonal truncations of a generalised Reed-Solomon code. One chooses a polynomial $g(X)$ of small degree and deduces how many zeros the polynomial $g(X)+g(X)^q$ has.

\begin{theorem} \label{hermortogRS2}
Suppose $k \leqslant q-1$ and $n\leqslant q^2$. There is a linear $[n,k,n-k+1]_{q^2}$ Hermitian self-orthogonal truncated generalised Reed-Solomon code if and only if there is a polynomial $g(X)\in {\mathbb F}_{q^2}[X]$ of degree at most $(q-k)q-1$, where
$$
g(X)+g(X)^q
$$
has $q^2-n$ distinct zeros when evaluated at $x \in {\mathbb F}_{q^2}$.
\end{theorem}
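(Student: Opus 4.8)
The statement is meant to follow by combining Theorem~\ref{puncturecodethm} with the explicit description of $P(C)$ in Theorem~\ref{RSpunc2}; the only point needing real work is the handling of the extra coordinate $c$. First I would note that a truncation of the $k$-dimensional Reed--Solomon code $C$ to $n\geqslant k$ coordinates is again MDS, hence has parameters $[n,k,n-k+1]_{q^2}$ (and a Hermitian self-orthogonal code of dimension $k$ has length $n\geqslant 2k$ anyway). So by Theorem~\ref{puncturecodethm} there is an $[n,k,n-k+1]_{q^2}$ Hermitian self-orthogonal truncated generalised Reed--Solomon code if and only if $P(C)$ contains a codeword of weight $n$. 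By Theorem~\ref{RSpunc2} every codeword of $P(C)$ equals
$$
(g(a_1)+g(a_1)^q+ca_1^{(q-k)(q+1)},\ldots,g(a_{q^2})+g(a_{q^2})^q+ca_{q^2}^{(q-k)(q+1)},c)
$$
for some $g\in\F_{q^2}[X]$ with $\deg g\leqslant(q-k)q-1$ and some $c\in\F_q$; such a codeword with $c=0$ has weight $n$ precisely when $g+g^q$ has exactly $q^2-n$ distinct zeros on $\F_{q^2}$. This already yields the ``if'' direction and one half of the ``only if'' direction.

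What remains is to show: if $P(C)$ has a codeword $\lambda$ of weight $n\leqslant q^2$ whose last coordinate is $c\neq0$, then $P(C)$ also has a codeword of weight $n$ with last coordinate $0$. Since the last coordinate already contributes $1$ to the weight of $\lambda$ and $n\leqslant q^2$, the vector $\lambda$ must vanish at some evaluation coordinate $a_{\ell_0}$. The idea is to move $a_{\ell_0}$ into the role of the final coordinate. To do this I would use that $C$ is invariant, up to linear equivalence, under the action of $\mathrm{PGL}(2,q^2)$ on its $q^2+1$ coordinates --- which we identify with the projective line $\mathbb{P}^1(\F_{q^2})=\F_{q^2}\cup\{\infty\}$, the final coordinate corresponding to $\infty$ --- and in particular under the M\"obius transformation $\sigma(X)=1/(X-a_{\ell_0})$, which swaps $a_{\ell_0}$ and $\infty$. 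A short computation starting from the definition \eqref{puncdef} of $P(C)$ shows that any linear equivalence of $C$ onto itself induces a map $P(C)\to P(C)$ that merely permutes the coordinates (by the induced permutation of $\mathbb{P}^1(\F_{q^2})$) and rescales them by nonzero constants; hence it preserves weights and carries $\supp\lambda$ to its image under that permutation. Applying this with the permutation coming from $\sigma$, the image $\mu$ of $\lambda$ has weight $n$ and vanishes at the $\infty$-coordinate, so $\mu$ has last coordinate $0$ and, by the previous paragraph, yields the desired polynomial $g$.

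The two ingredients of the first paragraph are immediate from the definitions and from Theorem~\ref{RSpunc2}, so the main obstacle is the reduction in the second paragraph: one has to make the linear equivalence realising a chosen element of $\mathrm{PGL}(2,q^2)$ on $C$ fully explicit and keep careful track of the induced action on $P(C)$, so that both the weight and the location of the $\infty$-coordinate are controlled. This also explains why the case $n=q^2+1$ --- where no coordinate may be deleted, so the $\infty$-coordinate cannot be moved out of the support --- must be treated separately, as is done in Section~\ref{qsqplusone}.
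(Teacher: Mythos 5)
Your proposal is correct and follows essentially the same route as the paper: the reverse direction and the $c=0$ case of the forward direction come from Theorem~\ref{RSpunc2} with $c=0$, and the remaining case is handled by noting that weight $n\leqslant q^2$ forces a zero at some evaluation coordinate $a_j$ and then applying exactly the M\"obius substitution $X\mapsto 1/(X-a_j)$ (realised in the paper as $f(X)=(X-a_j)^{k-1}\overline{f}(1/(X-a_j))$ with rescaled coefficients $\overline{\theta}_i=b_i^{1-k}\theta_i$) to swap that coordinate with the $\infty$-coordinate while preserving the weight. The explicit bookkeeping you defer is the short computation the paper carries out, and your observation about why $n=q^2+1$ must be treated separately matches the paper's structure.
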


\begin{proof}
The reverse implication follows from Theorem~\ref{puncturecodethm} and Theorem~\ref{RSpunc2} (taking $c=0$). 

For the forward implication, Theorem~\ref{puncturecodethm} implies there is a codeword in the puncture code of weight $n$. If the final coordinate is zero then Theorem~\ref{RSpunc2} suffices.

If not then we have to prove that a codeword in the puncture code of weight $n$ with a non-zero final coordinate implies there is also a  codeword in the puncture code of weight $n$ whose final coordinate is zero. Then we can apply Theorem~\ref{RSpunc2}.

Suppose that the $j$-th coordinate is the coordinate of a codeword in the puncture code of weight $n$ which is zero. Then, by (\ref{eqn13}), there are elements $\theta_i \in {\mathbb F}_{q^2}$ such that, for all polynomials $f,g \in {\mathbb F}_{q^2}[X]$ of degree at most $k-1$,
$$
\theta_{q^2+1}^{q+1} f_{k-1}^q g_{k-1}+\sum_{\substack{i=1\\ i \neq j }}^{q^2} \theta_{i}^{q+1} f(a_i)^q g(a_i)=0,
$$
where as before $f_{k-1}$ and $g_{k-1}$ are the coefficients of $X^{k-1}$ in $f(X)$ and $g(X)$ respectively.

Now, 
$$
f(X)=(X-a_j)^{k-1}\overline{f}(\frac{1}{X-a_j}),
$$
for some polynomial $\overline{f}$ of degree at most $k-1$. Thus, with 
$$
b_i=\frac{1}{a_i-a_j},
$$
the equation above becomes,
$$
\theta_{q^2+1}^{q+1} \overline{f}(0)^q \overline{g}(0)+\sum_{\substack{i=1\\ i \neq j }}^{q^2} \theta_{i}^{q+1} b_i^{(q+1)(1-k)}\overline{f}(b_i)^q \overline{g}(b_i)=0,
$$
since the coefficient of $X^{k-1}$ in $f$ is the constant term in $\overline{f}$.

Now, set $\overline{\theta}_i=b_i^{1-k} \theta_i$ for $i \neq j$, $b_j=0$ and $\overline{\theta}_j=\theta_{q^2+1}$.

Hence, we have that there are elements $\overline{\theta}_i \in {\mathbb F}_{q^2}$ such that, for all polynomials $\overline{f},\overline{g} \in {\mathbb F}_{q^2}[X]$ of degree at most $k-1$,
$$
\sum_{i=1}^{q^2} \overline{\theta}_{i}^{q+1}\overline{f}(b_i)^q \overline{g}(b_i)=0.
$$
Thus, the vector whose $i$-th coordinate is $\overline{\theta}_i^{q+1}$ is a vector of weight $n$ in the puncture code whose last coordinate is zero, which is what we wanted to prove.
\end{proof}

The {\em quantum Singleton bound}, from \cite{KL97}, states that if there is a $[\![n,k,d]\!]_q$ quantum code then 
$$
n \geqslant k+ 2(d-1).
$$
A quantum code meeting this bound is called a {\em quantum MDS code}.
 
\begin{example}  \label{example1}
Let $t$ be a divisor of $q+1$ and let $f \in {\mathbb F}_q[X]$ be such that
$$
t+\deg(f)(q+1) \leqslant (q-k)q-1.
$$
Let 
$$
N=1+t(q-1)+M,
$$ 
where $M$ is the number of distinct zeros in ${\mathbb F}_{q^2}$ of $f(X^{q+1})$ which are not $(t(q-1))$-th roots of unity. Then there is a linear $[q^2-N,k,q^2-N-k+1]_{q^2}$ Hermitian self-orthogonal truncated generalised Reed-Solomon code and therefore, by Theorem~\ref{sigmaortog}, a $[\![q^2-N,q^2-N-2k,k+1]\!]_q$ quantum MDS code.

To prove the above claim, let
$$
g(X)=c X^t f(X^{q+1})
$$
where $c^q=-c$. Then, for $x \in {\mathbb F}_{q^2}$,
$$
g(x)+g(x)^q=cx^t(1-x^{t(q-1)})f(x^{q+1}).
$$
The claim then follows directly from Theorem~\ref{hermortogRS2}.

To give a concrete example, assume that $q$ is odd. Let $f$ be a the product of linear factors in ${\mathbb F}_q[X]$ whose roots are non-squares. In other words, if $e$ is such that $f(e)=0$ then $e^{(q-1)/2}=-1$. Let $t$ be a divisor of $(q+1)/2$. If $x$ is a root of $f(X^{q+1})$ then $x^{(q^2-1)/2}=-1$. Therefore, the roots of $f(X^{q+1})$ are not $(t(q-1))$-th roots of unity. Thus $N=1+t(q-1)+(\deg f)(q+1)$.
\end{example}

\begin{example}  \label{example2}
Let $t$ be a divisor of $q+1$ and let $R\subseteq {\mathbb F}_q$ be such that 
$$
t+|R|q \leqslant (q-k)q-1.
$$
Let 
$$
N=1+t(q-1)+\sum_{r \in R} N_r,
$$ 
where $N_r$ is the number of distinct zeros of $X^q+X+r$, $r \in R$, which are not $(t(q-1))$-th roots of unity. Then there is a linear $[q^2-N,k,q^2-N-k+1]_{q^2}$ Hermitian self-orthogonal truncated generalised Reed-Solomon code and therefore, by Theorem~\ref{sigmaortog}, a $[\![q^2-N,q^2-N-2k,k+1]\!]_q$ quantum MDS code.

As in the previous example, to prove the claim, let
$$
g(X)=c X^t \prod_{r \in R}(X^{q}+X+r)
$$
where $c^q=-c$. Then, for $x \in {\mathbb F}_{q^2}$,
$$
g(x)+g(x)^q=cx^t(1-x^{t(q-1)}) \prod_{r \in R}(x^{q}+x+r)
$$
The claim then follows directly from Theorem~\ref{hermortogRS2}.
\end{example}

\begin{example} \label{example3}
Suppose that $R\subseteq \{ e \in {\mathbb F}_{q^2} \ | \ e^{q+1}=1\}$ has the property that $e \in R$ if and only if $e^{-1} \in R$. Let $t$ be such that $t-|R|$ is a divisor of $q+1$ and
$$
t+|R|(q-1) \leqslant (q-k)q-1.
$$
Let 
$$N=1+(t-|R|)(q-1)+\sum_{e \in R} N_e,
$$
where $N_e$ is the number of zeros of $X^{q-1}+e$, $e \in R$, which are not $((t-|R|)(q-1))$-th roots of unity. Then there is a linear $[q^2-N,k,q^2-N-k+1]_{q^2}$ Hermitian self-orthogonal truncated generalised Reed-Solomon code and therefore, by Theorem~\ref{sigmaortog}, a $[\![q^2-N,q^2-N-2k,k+1]\!]_q$ quantum MDS code.

As in the previous examples, to prove the claim, let
$$
g(X)=c X^t \prod_{e \in R}(X^{q-1}+e)
$$
where $c^q=-c$. Then, for $x \in {\mathbb F}_{q^2}$,
$$
c^{-1}(g(x)+g(x)^q)=x^t\prod_{e \in R}(x^{q-1}+e)-x^{tq} \prod_{e \in R}(x^{1-q}+e^{-1})
$$
$$
=x^t(1-x^{(t-|R|)(q-1)})\prod_{e \in R}(x^{q-1}+e),
$$
where we use the fact that $\prod_{e \in R} e=1$. Apply Theorem~\ref{hermortogRS2}.
\end{example}

\section{The minimum distance of the puncture code of the Reed-Solomon code}  \label{section4}

In this section we determine the minimum weight of the puncture code of the Reed-Solomon code and verify Conjecture 11 from \cite{GR2015}. This we do by considering each case of (\ref{mdRS}) in turn.

In \cite{Ball2021b} it is proven that the Grassl-R\"otteler MDS codes from \cite{GR2015} are in fact generalised Reed-Solomon codes. Thus, Conjecture 11 from \cite{GR2015} states that the minimum distance of the puncture code $P(C)$ of the $[q^2+1,k,q^2+2-k]_{q^2}$ Reed-Solomon code $C$ is
\begin{equation} \label{mdRS}
d= \left\{ \begin{array}{ll} 
2k & \mathrm{if}\ 1 \leqslant k \leqslant q/2 \\ 
(q+1)(k-(q-1)/2) & \mathrm{if} \ (q+1)/2 \leqslant k \leqslant q-1, \ q \ \mathrm{odd}  \\
q(k+1-q/2) & \mathrm{if} \ q/2 \leqslant k \leqslant q-1, \ q \ \mathrm{even}  \\
q^2+1 & \mathrm{if} \ k=q.
\end{array} \right.
\end{equation}
In this section we will verify this conjecture. The case $k=q$ can be dealt with immediately since, by Theorem~\ref{RSpunc}, the dimension of $P(C)$ is $1$ and the subspace $U$ consists of the constant function, which implies that $P(C)$ is spanned by the all-one vector, which has weight $q^2+1$.

Recall that, since $P(C)$ is a linear code, the minimum distance is equal to the minimum non-zero weight.

\begin{theorem}
If $1 \leqslant k \leqslant q/2$ then the minimum distance of the puncture code $P(C)$ of the $[q^2+1,k,q^2+2-k]_{q^2}$ Reed-Solomon code $C$ is $2k$.
\end{theorem}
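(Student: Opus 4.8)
The plan is to use the characterisation from Theorem~\ref{hermortogRS2} (equivalently Theorem~\ref{RSpunc2}). By Theorem~\ref{hermortogRS2}, a codeword of the puncture code of weight $n$ corresponds to a polynomial $g \in {\mathbb F}_{q^2}[X]$ of degree at most $(q-k)q-1$ such that $g(X)+g(X)^q$ has exactly $q^2+1-n$ zeros in ${\mathbb F}_{q^2}$ — but one must be careful, since Theorem~\ref{hermortogRS2} handles $n \leqslant q^2$, and here we want the length-$(q^2+1)$ code, so the relevant statement is really Theorem~\ref{RSpunc} (or Theorem~\ref{RSpunc2}) applied directly: a weight-$n$ codeword corresponds to a polynomial $h \in U$ (or to the pair $(g,c)$) vanishing at exactly $q^2+1-n$ points of ${\mathbb F}_{q^2}$ — counting the last coordinate separately. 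So first I would translate the minimum-distance statement into the following: the maximum number of common zeros in ${\mathbb F}_{q^2}$ of a nonzero $h\in U$, plus a possible vanishing of the last coordinate $h_{q-k}$, is $q^2+1-2k$; equivalently, using the representation $h(x)=g(x)+g(x)^q+cx^{(q-k)(q+1)}$ with $\deg g \leqslant (q-k)q-1$, I must show no such nonzero function vanishes at more than $q^2+1-2k$ points (again keeping track of $c$, the last coordinate).

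The core is therefore an upper bound on the number of zeros. For the upper bound, take a nonzero $h\in U$; since $h$ maps ${\mathbb F}_{q^2}$ into ${\mathbb F}_q$, write $h(X)$ in the explicit form \eqref{Udefn} and consider its degree. The leading term: the monomials appearing are $X^{iq+j}$ and $X^{jq+i}$ for $0\leqslant i\leqslant q-k-1$, $i+1\leqslant j\leqslant q-1$, plus $X^{i(q+1)}$ for $0\leqslant i\leqslant q-k$. The largest exponent that can occur is $(q-k-1)q+(q-1)=q^2-kq+q-q-1+... $ — I would compute the maximum over all these, and see that it is roughly $(q-k-1)q + (q-1)$, which should be smaller than $q^2$, so that $h$, as a genuine polynomial of degree less than $q^2$, has at most $\deg h$ zeros unless it is the zero polynomial. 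Comparing $\deg h$ with $q^2+1-2k$ will not immediately give the bound — the honest degree can be as large as about $q^2-kq-1$, which for small $k$ exceeds $q^2+1-2k$ — so a crude degree count is not enough. Instead I would argue more carefully: the condition that $h$ has $q^2+1-n$ zeros together with the coordinate count forces, via a dimension/interpolation argument on $U$, a contradiction when $n>q^2+1-2k$. Concretely, if $h$ vanishes on a set $Z\subseteq {\mathbb F}_{q^2}$ with $|Z|\geqslant q^2+2-2k$, I want to show $h$ is forced to be identically $0$ on ${\mathbb F}_{q^2}$, which (since $U$-functions are determined by their values, by the first paragraph of the proof of Theorem~\ref{RSpunc}) forces all $h_{ij}=0$, $h_i=0$.

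For the matching construction (lower bound: exhibiting a codeword of weight exactly $2k$, i.e. $q^2+1-2k$ zeros), I would take the simplest polynomial available: in the representation $g(x)+g(x)^q+cx^{(q-k)(q+1)}$, choose $c=0$ and $g(X)=X^{(q-k)q-1}$, or more cleverly a polynomial of the shape in Example~\ref{example1}/Example~\ref{example3} engineered so that $g(X)+g(X)^q$ factors as $x^t(1-x^{t(q-1)})f(x^{q+1})$ with the parameter $t$ a divisor of $q+1$ chosen to maximise $t(q-1)+1$ subject to $\deg g \leqslant (q-k)q-1$; since $k\leqslant q/2$ we have $(q-k)q-1\geqslant q^2/2-1$, which comfortably allows $t=q+1$ giving $t(q-1)=q^2-1$ zeros — too many — so I would instead pick $t$ and $f$ so the zero count lands exactly at $q^2+1-2k$, e.g. $t=1$ together with a product $f$ of $(2k-2)/(q+1)$-or-so carefully chosen factors, or simply invoke Example~\ref{example3} with $|R|$ and $t$ tuned so that $N=1+(t-|R|)(q-1)+\sum N_e = 2k$. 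The arithmetic of hitting exactly $2k$ for every $k$ in the range $1\leqslant k\leqslant q/2$ is the fiddly part of this direction; I expect the $q$ even and $q$ odd subcases to need slightly different choices (as the later cases of \eqref{mdRS} suggest), but in this range $2k\leqslant q$ is small so there should be enough room.

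The main obstacle I anticipate is the upper bound, specifically ruling out the low-weight "sparse" polynomials $h$ whose honest degree is large (close to $q^2-kq$) but which could a priori still have very many zeros: a plain Bézout/degree bound overshoots. The right tool is likely to observe that such an $h$, restricted to the multiplicative structure, can be written as $x \mapsto x^{a}\,\widetilde h(x)$ or grouped by the $q$-linearised structure of $g(x)+g(x)^q$ (an ${\mathbb F}_q$-linear combination of Frobenius, i.e. additive polynomials twisted by monomials), and then the zero set has controlled size — combining the additive-polynomial kernel bound with the multiplicative part. Getting this bound tight at exactly $q^2+1-2k$, and matching it with the construction for all $k\leqslant q/2$, is where the real work lies; everything else is bookkeeping with the explicit form of $U$ and an appeal to Theorem~\ref{RSpunc}/\ref{RSpunc2}.
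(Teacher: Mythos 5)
Your proposal does not close either direction of the bound, and it misses the observation that makes this case essentially trivial. For the lower bound on the weight (no nonzero codeword of weight $m\leqslant 2k-1$), the paper does not count zeros of $g+g^q$ at all: a nonzero word of $P(C)$ of weight $m$ gives, by Theorem~\ref{puncturecodethm}, a Hermitian self-orthogonal truncation of $C$ of length $m$, and since $C$ is MDS this truncation has dimension $\min\{m,k\}$. But $D\subseteq D^{\perp_h}$ forces $\dim D\leqslant m-\dim D$, i.e.\ $m\geqslant 2\min\{m,k\}$, which is impossible for $1\leqslant m\leqslant 2k-1$. This completely sidesteps the problem you correctly identify as hard --- bounding the number of zeros of a sparse $h\in U$ of honest degree close to $q^2-kq$ --- and which you leave unresolved (your ``the right tool is likely\dots'' paragraph is a plan, not an argument; note also that this slick dimension argument is exactly why the bound $2k$ only persists up to $k\approx q/2$, after which the zero-counting of Theorems~\ref{mdqeven} and~\ref{mdqodd} really is needed).

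For the matching construction there is also a concrete error: you ask for Example~\ref{example3} with parameters tuned so that $N=2k$, but in those examples $N$ is the number of zeros and the resulting codeword has weight $q^2-N$, so you would need $N=q^2-2k$ (or $q^2+1-2k$), not $N=2k$; moreover, since $t-|R|\geqslant 1$ divides $q+1$, those constructions cannot produce arbitrary small even weights, and you never verify the ``fiddly arithmetic'' you flag. The paper instead takes $2k$ distinct points $a_1,\ldots,a_{2k}\in{\mathbb F}_q$ (the subfield!) and solves the $2k-1$ homogeneous equations $\sum_{\ell}\lambda_\ell a_\ell^{r}=0$, $r=0,\ldots,2k-2$, in $2k$ unknowns $\lambda_\ell\in{\mathbb F}_q$; because $a_\ell^{iq+j}=a_\ell^{i+j}$ with $i+j\leqslant 2k-2$, this $\lambda$ lies in $P(C)$ and has weight at most $2k$, which combined with the lower bound gives weight exactly $2k$. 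You should rework both halves along these lines rather than through the zero-counting machinery of Theorem~\ref{hermortogRS2}.
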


\begin{proof}
Let $a_1,\ldots,a_{2k}$ be distinct elements of ${\mathbb F}_q$. There are $2k$ elements $\theta_i \in {\mathbb F}_{q^2}$, not all zero, such that
$$
\sum_{\ell=1}^{2k} \theta_{\ell}^{q+1}a_{\ell}^r=0,
$$
for all $r \in \{0,\ldots,2k-2\}$. Since $a_{\ell} \in {\mathbb F}_q$, this implies that
$$
\sum_{\ell=1}^{2k}  \theta_{\ell}^{q+1}a_{\ell}^{iq+j}=0,
$$
for all $i,j \in \{0,\ldots,k-1\}$. This implies that
$$
\sum_{\ell=1}^{2k} \theta_{\ell}^{q+1}f(a_{\ell})^{q}g(a_{\ell})=0,
$$
for all polynomials $f$ and $g$ of degree at most $k-1$. Therefore, there is a vector in the puncture code $P(C)$ of weight at most $2k$.


We must now prove that all non-zero elements of $P(C)$ have weight at least $2k$. Suppose that $P(C)$ contains a non-zero codeword of weight at most $m \leqslant 2k-1$. The truncation of $C$ at these $m$ coordinates is a Hermitian self-orthogonal code of dimension $\min \{m,k\}$, which contradicts Theorem~\ref{sigmaortog}, since the length of a Hermitian self-orthogonal code must be at least twice the dimension.

\end{proof}

We now tackle the second and third cases of (\ref{mdRS}). In each case we prove first, in Lemma~\ref{qevenlem} and Lemma~\ref{qoddlem}, that the puncture code has a codeword of weight conjectured by (\ref{mdRS}) and then in Theorem~\ref{mdqeven} and Theorem~\ref{mdqodd}, prove that there is no codeword in the puncture code of less weight.

We define the trace polynomial
$$
\mathrm{tr}_{q \rightarrow 2}(X)=X+X^2+X^4+\cdots+X^{q/2}=\sum_{j=0}^{h-1} X^{2^j},
$$
where $q=2^h$. 

The evaluation of this polynomial is the usual trace function from ${\mathbb F}_q$ to ${\mathbb F}_2$.

\begin{lemma} \label{qevenlem}
If $q/2 \leqslant k \leqslant q-1$ and $q$ is even then the minimum distance of the puncture code $P(C)$ of the $[q^2+1,k,q^2+2-k]_{q^2}$ Reed-Solomon code $C$ is at most $q(k+1-q/2)$.
\end{lemma}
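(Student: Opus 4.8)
The goal is to exhibit a single polynomial $g(X) \in {\mathbb F}_{q^2}[X]$ of degree at most $(q-k)q-1$ such that $g(X)+g(X)^q$ has exactly $q^2 - n$ distinct zeros in ${\mathbb F}_{q^2}$, where we want $n = q^2+1 - q(k+1-q/2)$, so that Theorem~\ref{hermortogRS2} produces the desired code of length $n = q(k+1-q/2)$. Hence I need $g+g^q$ to vanish at exactly $q^2 - n = q(k+1-q/2) - 1$ points. (Strictly, Theorem~\ref{hermortogRS2} requires $n \leqslant q^2$, which holds here, but note we are proving the puncture code of the \emph{length $q^2+1$} code has a word of this weight; a word of weight $n \leqslant q^2$ in $P(C)$ for the $[q^2+1,k]$ code restricts, since truncating the last coordinate only shrinks the support — equivalently, a weight-$n$ truncation of the $[q^2+1,k]$ RS code is a truncation of the $[q^2,k]$ RS code, so Theorem~\ref{hermortogRS2} applies directly.)

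The natural construction, following the pattern of Examples~\ref{example1}--\ref{example3}, is to take $g(X) = c X^t\,\mathrm{tr}_{q\to 2}(X^{q-1})^{?}$ or, more simply, to build $g+g^q$ directly as a product $x^t(1 - x^{s(q-1)})$ times a trace-like factor so that its number of zeros in ${\mathbb F}_{q^2}$ can be counted exactly. Concretely I would try $g(X) = c X^t \prod (\text{factors})$ with $c^q = -c$, chosen so that $g(x)+g(x)^q = c\,x^t(1 - x^{t(q-1)}) P(x^{q+1})$ for a suitable $P \in {\mathbb F}_q[X]$, as in Example~\ref{example1}; here the zeros are: $x=0$ (one zero), the $(t(q-1))$-th roots of unity in ${\mathbb F}_{q^2}$ (there are $t(q-1)$ of them when $t \mid q+1$, since $\gcd(t(q-1), q^2-1) = t(q-1)$), and the zeros of $P(x^{q+1})$ not already counted. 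To hit the target count $q(k+1-q/2) - 1$, I would pick $t$ and $P$ so that $1 + t(q-1) + (\deg P)(q+1) = q(k+1-q/2) - 1$; since $q$ is even, writing $q = 2^h$ and choosing $t$ a suitable divisor of $q+1$ together with $\deg P$ solving a linear Diophantine-type equation, one arranges the degree bound $t + (\deg P) q \leqslant (q-k)q - 1$ to hold. The arithmetic is essentially a bookkeeping exercise: with $k$ in the range $q/2 \leqslant k \leqslant q-1$ the quantity $(q-k)q-1$ is comfortably large, and the count $q(k+1-q/2)-1 = q^2 - n$ should be realizable.

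A cleaner route, and the one I would actually pursue, uses the even-characteristic trace directly: take $g(X) = cX^{?}\,\mathrm{tr}_{q\to 2}(\text{something})$ so that $g+g^q$ factors through $\mathrm{tr}_{q\to 2}$ and its zero set is a union of cosets of a subgroup, making the count transparent. The key identity to exploit is that $\mathrm{tr}_{q\to 2}(y) = 0$ for exactly $q/2$ of the $q$ values $y \in {\mathbb F}_q$. Setting $y = x^{q+1} = \mathrm{Norm}(x)$, which takes each value in ${\mathbb F}_q^\ast$ exactly $q+1$ times, one gets $\mathrm{tr}_{q\to 2}(x^{q+1}) = 0$ for $(q/2 - 1)(q+1) + 1$ values of $x$ (the "$-1$" and "$+1$" handling $y = 0$, i.e. $x = 0$). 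Then multiplying by a factor of the form $x^t(1 - x^{s(q-1)})$ adjusts the count by $1 + s(q-1)$ (minus overlaps, which can be controlled). I would calibrate $s$, $t$ and the precise trace argument so that the total equals $q(k+1-q/2)-1$ exactly, and check $\deg g \leqslant (q-k)q - 1$.

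The main obstacle is the \emph{exact} zero count: I must ensure no accidental coincidences between the roots of the trace factor and the roots of unity factor $1 - x^{s(q-1)}$ (and that $0$ is counted with the right multiplicity-as-a-set-element), since Theorem~\ref{hermortogRS2} needs $g+g^q$ to have \emph{precisely} $q^2-n$ \emph{distinct} zeros, not at least that many. Arranging disjointness typically forces a parity/divisibility condition linking $s$ to $q+1$ and the trace argument to the norm map; getting these compatible with the degree constraint $t + (\text{other degrees}) \cdot q \leqslant (q-k)q-1$ across the \emph{entire} range $q/2 \leqslant k \leqslant q-1$ is the delicate part, and may require splitting into a couple of subcases according to the residue of $k$ or the value of $\gcd$s involved. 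Everything else — verifying $g+g^q$ has the claimed factored form, and invoking Theorem~\ref{hermortogRS2} — is routine.
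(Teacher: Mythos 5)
There is a genuine gap, and it starts with the very first computation: your target for the number of zeros of $g+g^q$ is wrong. The codeword of $P(C)$ produced by Theorem~\ref{RSpunc2} (with $c=0$) is $(g(a_1)+g(a_1)^q,\ldots,g(a_{q^2})+g(a_{q^2})^q,0)$, so to get weight $q(k+1-q/2)$ you need $g+g^q$ to have
$$
q^2-q(k+1-\tfrac{1}{2}q)=q(q-k-1)+\tfrac{1}{2}q^2
$$
distinct zeros. You instead set $n=q^2+1-q(k+1-q/2)$ (the number of \emph{zero} coordinates of the length-$(q^2+1)$ word) and then computed $q^2-n=q(k+1-q/2)-1$, conflating that quantity with the length of the truncated code; for $k$ near $q/2$ your target is about $q-1$ zeros where roughly $q^2$ are required, so all the subsequent calibration of $t$, $\deg P$, etc.\ is aimed at the wrong number and cannot be repaired by bookkeeping.

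Even with the correct target, the building blocks you propose do not naturally reach it. The factor $x^t(1-x^{t(q-1)})$ contributes $1+t(q-1)$ zeros and $f(x^{q+1})$ contributes multiples of $q+1$, whereas the required count $q(q-k-1)+\frac{1}{2}q^2$ is built from multiples of $q$ plus the term $\frac{1}{2}q^2$; your ``cleaner route'' via $\mathrm{tr}_{q\to 2}(x^{q+1})$ gives $(\frac{1}{2}q-1)(q+1)+1$ zeros, again the wrong shape. The missing idea is to use the additive polynomials $X^q+X+e$, each of which has exactly $q$ roots in ${\mathbb F}_{q^2}$ (a coset of the kernel of the relative trace), together with the factor $\mathrm{tr}_{q\to 2}(X)$, which turns into $\mathrm{tr}_{q^2\to 2}(x)$ in $g+g^q$ and contributes exactly $\frac{1}{2}q^2$ zeros. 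Concretely the paper takes $R$ a set of $q-k-1$ elements $e\in{\mathbb F}_q$ with $\mathrm{tr}_{q\to 2}(e)=1$ and $g(X)=\mathrm{tr}_{q\to 2}(X)\prod_{e\in R}(X^q+X+e)$; the identity $\mathrm{tr}_{q\to 2}(x^q+x+e)=\mathrm{tr}_{q^2\to 2}(x)+\mathrm{tr}_{q\to 2}(e)$ then makes the zero sets of the factors pairwise disjoint with no case analysis, and the degree bound $(q-k-1)q+\frac{1}{2}q\leqslant (q-k)q-1$ is immediate. Your proposal never exhibits a concrete polynomial and explicitly defers exactly the disjointness and calibration issues that this choice resolves.
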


\begin{proof}
Let 
$$
R \subseteq \{ e \in {\mathbb F}_q \ | \ \mathrm{tr}_{q \rightarrow 2} (e) = 1\}
$$
of size $q-k-1$ and define
$$
g(X)=\mathrm{tr}_{q \rightarrow 2}(X)\prod_{e \in R} (X^q+X+e).
$$
For all $x \in {\mathbb F}_{q^2}$,
$$
g(x)+g(x)^q=\mathrm{tr}_{q^2 \rightarrow 2}(x)\prod_{e \in R} (x^q+x+e).
$$
The polynomials $X^q+X+e$, $e \in R$, have $q$ zeros which are not zeros of $\mathrm{tr}_{q^2 \rightarrow 2}(X)$, since
$$
\mathrm{tr}_{q \rightarrow 2}(x^q+x+e)=\mathrm{tr}_{q^2 \rightarrow 2}(x)+\mathrm{tr}_{q \rightarrow 2}(e).
$$
Clearly, the zeros of $X^q+X+e$ are distinct for distinct $e$. Thus, $g(x)+g(x)^q$ has $q(q-k-1)+q^2/2$ zeros. 

By Theorem~\ref{RSpunc2}, $P(C)$ has a codeword of weight 
$$
q^2-q(q-k-1)-q^2/2=q(k+1-q/2).
$$
\end{proof}

\begin{theorem} \label{mdqeven}
If $q/2 \leqslant k \leqslant q-1$ and $q$ is even then the minimum distance of the puncture code $P(C)$ of the $[q^2+1,k,q^2+2-k]_{q^2}$ Reed-Solomon code $C$ is $q(k+1-q/2)$.
\end{theorem}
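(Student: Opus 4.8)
The plan is to prove the lower bound, the matching codeword having already been exhibited in Lemma~\ref{qevenlem}. By Theorem~\ref{hermortogRS2}, together with the argument in its proof that every weight occurring in $P(C)$ is attained by a codeword whose final coordinate is zero, it suffices to show: if $g\in\mathbb{F}_{q^2}[X]$ has $\deg g\leqslant(q-k)q-1$ and $g(X)+g(X)^q$ is not the zero function on $\mathbb{F}_{q^2}$, then $g(X)+g(X)^q$ has at most $q^2-q(k+1-q/2)=\tfrac12q^2+q(q-k-1)$ distinct zeros in $\mathbb{F}_{q^2}$. Put $m=q-k$ (so $1\leqslant m\leqslant q/2$), $f=g+g^q$, and $Z=\{x\in\mathbb{F}_{q^2}:f(x)=0\}$; the goal is $|Z|\leqslant\tfrac12q^2+q(m-1)$, which is exactly the number of zeros attained by the extremal $g$ of Lemma~\ref{qevenlem}.

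First I would assemble the relevant structure. By Theorem~\ref{RSpunc2} (with $c=0$) the function $f$ is represented on $\mathbb{F}_{q^2}$ by an element of $U$, namely $\bar f(X)=\sum_{i=0}^{m-1}\sum_{j=i+1}^{q-1}(h_{ij}X^{iq+j}+h_{ij}^qX^{jq+i})+\sum_{i=0}^{m-1}h_iX^{i(q+1)}$ with $h_{ij}\in\mathbb{F}_{q^2}$ and $h_i\in\mathbb{F}_q$; thus $\deg\bar f\leqslant q^2-q+m-1$, the coefficients occur in conjugate pairs, $\bar f$ is $\mathbb{F}_q$-valued on $\mathbb{F}_{q^2}$, and $|Z|=\deg\gcd(\bar f,X^{q^2}-X)$. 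Two further remarks will be used: the exponents of $\bar f$ are digit-balanced, in that each monomial is $X^{e_1q+e_0}$ with $0\leqslant e_0,e_1\leqslant q-1$ and $\min(e_0,e_1)\leqslant m-1$; and the restriction of $\bar f$ to the subfield $\mathbb{F}_q$ is given by an $\mathbb{F}_q$-polynomial of degree at most $m+q-2$, so $\#(Z\cap\mathbb{F}_q)\leqslant m+q-2$ unless $f$ vanishes identically on $\mathbb{F}_q$.

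The main step, which is also the genuine obstacle, is to improve the crude estimate $|Z|\leqslant\deg\bar f\leqslant q^2-q+m-1$ to the sharp $|Z|\leqslant\tfrac12q^2+q(m-1)$; the two differ by roughly a factor of two when $m$ is small, and coincide only at $q=2$, so a pure degree count cannot suffice and the ``missing'' zeros of $\bar f$ must lie outside $\mathbb{F}_{q^2}$ or be repeated. I expect the gain to come from three combined ingredients: the digit-balanced shape forcing $\bar f$ to share a large common factor with a linearised polynomial; the fact that each non-rational zero of $\bar f$ contributes a factor together with its Frobenius conjugate; and a rigidity analysis of the extremal configuration, showing $|Z|$ is maximised only when, up to the available degree, $\bar f$ equals $\mathrm{tr}_{q^2\rightarrow2}(X)\prod_{e\in R}(X^q+X+e)$ — the additive factor $\mathrm{tr}_{q^2\rightarrow2}(X)$ cutting out an $\mathbb{F}_2$-hyperplane of $\mathbb{F}_{q^2}$ and accounting for $\tfrac12q^2$ zeros, and the $q-k-1$ Artin--Schreier factors each for a further $q$ zeros — which is precisely the function appearing in the proof of Lemma~\ref{qevenlem}. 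A concrete route is to bound the number of $x$ for which $g(x)$ lies in each of the $q+1$ one-dimensional $\mathbb{F}_q$-subspaces of $\mathbb{F}_{q^2}$, using that for distinct subspaces these sets meet only in $\{x:g(x)=0\}$, combined with an induction on $q-k$ that reduces the pair $(k,q)$ to $(k+1,q)$ when $g$ has small enough degree and treats the top-degree range of $g$ directly. Evenness of $q$ is used essentially throughout, in particular via the vanishing of $(g^q)'$, which controls the repeated part of $f$, and in the identification of the extremal additive polynomial with the absolute trace $\mathrm{tr}_{q^2\rightarrow2}$.
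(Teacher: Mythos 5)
Your reduction is sound and matches the paper's: by Lemma~\ref{qevenlem} only the lower bound remains, and by Theorem~\ref{puncturecodethm} and Theorem~\ref{hermortogRS2} it suffices to show that for $g\in{\mathbb F}_{q^2}[X]$ with $\deg g\leqslant (q-k)q-1$, the function $g+g^q$ cannot have more than $\tfrac12 q^2+q(q-k-1)$ distinct zeros in ${\mathbb F}_{q^2}$. But that bound is the entire content of the theorem, and you do not prove it. You correctly identify it as ``the genuine obstacle'' and then offer only a list of mechanisms you \emph{expect} to yield the gain (a large common factor with a linearised polynomial, Frobenius-conjugate pairing of non-rational zeros, a rigidity analysis of the extremal configuration, a count of $x$ with $g(x)$ in each one-dimensional ${\mathbb F}_q$-subspace, an induction on $q-k$). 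None of these is carried out, and several are not obviously workable: for instance, the rigidity claim that $|Z|$ is maximised only when $\bar f$ essentially equals $\mathrm{tr}_{q^2\rightarrow 2}(X)\prod_{e\in R}(X^q+X+e)$ is at least as hard as the bound itself, and the subspace-counting idea gives no evident control because $g$ has large degree ($(q-k)q-1$ can be close to $q^2$) so Weil-type or degree-type bounds on the fibres $\{x: g(x)\in L\}$ do not close. The observation that $(g^q)'=0$ controls repeated factors of $f$ also does not help directly, since the issue is zeros of $\bar f$ outside ${\mathbb F}_{q^2}$ or of high multiplicity, not the square part of $f$ per se. As written, the proposal is a correct problem statement plus a research plan, not a proof.

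For contrast, the paper's argument is a concrete multiplier-polynomial (polynomial-method) computation. Writing the number of distinct zeros as lying in an interval of length $\tfrac12 q$ indexed by $m$ (two cases, depending on which half of the interval $[(\tfrac12 q+m)q+1,(\tfrac12 q+m+1)q]$ the count falls in), one chooses a non-zero $c(X)=\sum_{i,j}c_{ij}X^{iq+j}$ supported on small second digits $j$, with the coefficients determined by linear algebra so that $c\,(g+g^q)\bmod (X^{q^2}-X)$ has no monomials $X^{aq+b}$ in a prescribed range; a count of conditions versus unknowns guarantees $c\neq 0$ exists. The degree restrictions on $g$ and on $c$ then force $\deg\bigl(c(g+g^q)\bmod (X^{q^2}-X)\bigr)$ below the assumed number of zeros, so $c(g+g^q)\equiv 0$; hence $c$ vanishes at every non-zero of $g+g^q$, which exceeds the degree of $c^q\bmod (X^{q^2}-X)$ and forces $c=0$, a contradiction. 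If you want to complete your proof, this is the missing engine; nothing in your sketch substitutes for it.
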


\begin{proof}
Lemma~\ref{qevenlem} implies that there is a codeword of weight $q(k+1-q/2)$ in the puncture code, so we only need to show that $P(C)$ cannot have codewords of less weight.

Suppose that $P(C)$ has a codeword of weight at most $q(k+1-q/2)-1$. By Theorem~\ref{hermortogRS2} and Theorem~\ref{puncturecodethm}, there is a polynomial $g \in {\mathbb F}_{q^2}[X]$ of degree at most $(q-k)q-1$ such that
$$
g+g^q
$$
has at least $q^2-(q(k+1-q/2)-1)=(\frac{3}{2}q^2-k-1)q+1$ distinct zeros. 

We will obtain a contradiction considering two separate cases.

\underline{Case 1:}
Suppose that $g+g^q$ has between $(\frac{1}{2}q+m)q+1$ and $(\frac{1}{2}q+m)q+\frac{1}{2}q$ distinct zeros in ${\mathbb F}_{q^2}$, for some $m$. By the above, we have that $m\geqslant q-k-1$ and clearly $m \leqslant \frac{1}{2}q-1$.

Let 
$$
c(X)=\sum_{i=0}^{\frac{1}{2}q-1} \sum_{j=0}^{\frac{1}{2}q-m-1} c_{ij} X^{iq+j},
$$
where the coefficients $c_{ij}$ are chosen so that
$$
c(g+g^q)
$$
has no terms of degree $aq+b$, where $a \in \{\frac{1}{2}q+m,\ldots,q-1\}$ and $b \in \{0,\ldots,\frac{1}{2}q-1\}$, $(a,b) \neq (\frac{1}{2}q+m,0)$. Such a non-zero polynomial $c(X)$ exists since we impose $(\frac{1}{2}q-m)\frac{1}{2}q-1$ linear homogeneous conditions and we have $(\frac{1}{2}q-m)\frac{1}{2}q$ coefficients defining $c(X)$.

The degree of $cg$ is at most 
$$
(\tfrac{3}{2}q-k-1)q+\tfrac{1}{2}q-m-2 \leqslant (m+\tfrac{1}{2})q+\tfrac{1}{2}q-m-2.
$$
Now,
$$
g=\sum_{i=0}^{q-k-1} \sum_{j=0}^{q-1} g_{ij} X^{iq+j}
$$
implies
$$
g^q=\sum_{i=0}^{q-k-1} \sum_{j=0}^{q-1} g_{ij}^q X^{jq+i} \pmod{X^{q^2}-X},
$$
so the only terms of degree $aq+b$ in $cg^q$ modulo $X^{q^2}-X$, for which $a \in \{\frac{1}{2}q+m,\ldots,q-1\}$, have $b \in \{0,\ldots, \frac{1}{2}q-1\}$, since 
$$
q-k-1+\tfrac{1}{2}q-m-1 \leqslant \tfrac{1}{2}q-1.
$$
However, we chose $c(X)$ so that $c(g+g^q)$ has no terms of degree $aq+b$, where $a \in \{\frac{1}{2}q+m,\ldots,q-1\}$ and $b \in \{0,\ldots,\frac{1}{2}q-1\}$, $(a,b) \neq (\frac{1}{2}q+m,0)$. Hence, we conclude that
$$
c(g+g^q) \pmod{X^{q^2}-X}
$$
has degree at most $(\frac{1}{2}q+m)q$.

Now we use the fact that $g+g^q$ has at least $(\frac{1}{2}q+m)q+1$ distinct zeros to conclude that
$$
c(g+g^q)=0 \pmod{X^{q^2}-X}.
$$
Then the fact that $g+g^q$ has at most $(\frac{1}{2}q+m)q+\frac{1}{2}q$ distinct zeros implies that $c$ has more than $(\frac{1}{2}q-m)q-\frac{1}{2}q$ distinct zeros. However, 
$$
c^q=\sum_{i=0}^{\frac{1}{2}q-1} \sum_{j=0}^{\frac{1}{2}q-m-1} c_{ij}^q X^{jq+i}  \pmod{X^{q^2}-X},
$$
which has degree at most  $(\frac{1}{2}q-m)q-\frac{1}{2}q-1$. This implies $c=0$, contradicting the fact that $c \neq 0$.

\underline{Case 2:}
Suppose that $g+g^q$ has between $(\frac{1}{2}q+m)q+\frac{1}{2}q+1$ and $(\frac{1}{2}q+m+1)q$ distinct zeros in ${\mathbb F}_{q^2}$, for some $m$. As before, we have that $m\geqslant q-k-1$ and since $g+g^q$ modulo $X^{q^2}-X$ has degree at most $(q-1)q+\frac{1}{2}q-1$, we have that $m \leqslant \frac{1}{2}q-2$.

Let 
$$
c(X)=\sum_{i=0}^{\frac{1}{2}q-1} \sum_{j=0}^{\frac{1}{2}q-m-2} c_{ij} X^{iq+j},
$$
where the coefficients $c_{ij}$ are chosen so that
$$
c(g+g^q)
$$
has no terms of degree $aq+b$, where $a \in \{\frac{1}{2}q+m+1,\ldots,q-1\}$ and $b \in \{0,\ldots,\frac{1}{2}q-2\}$. Such a non-zero polynomial $c(X)$ exists since we impose $(\frac{1}{2}q-m-1)(\frac{1}{2}q-1)$ linear homogeneous conditions and we have $(\frac{1}{2}q-m-1)\frac{1}{2}q$ coefficients defining $c(X)$.

The degree of $cg$ is at most 
$$
(\tfrac{3}{2}q-k-2)q+\tfrac{1}{2}q-m-3 \leqslant (m+\tfrac{1}{2})q-\tfrac{1}{2}q-m-3.
$$
Arguing as in Case 1, the only terms of degree $aq+b$ in $cg^q$ modulo $X^{q^2}-X$, for which $a \in \{\frac{1}{2}q+m,\ldots,q-1\}$, have $b \in \{0,\ldots, \frac{1}{2}q-2\}$, since 
$$
q-k-1+\tfrac{1}{2}q-m-2 \leqslant \tfrac{1}{2}q-2.
$$
However, we chose $c(X)$ so that $c(g+g^q)$ has no terms of degree $aq+b$, where $a \in \{\frac{1}{2}q+m+1,\ldots,q-1\}$ and $b \in \{0,\ldots,\frac{1}{2}q-2\}$. Hence, we conclude that
$$
c(g+g^q) \pmod{X^{q^2}-X}
$$
has degree at most $(\frac{1}{2}q+m)q+\frac{1}{2}q-2$.

Now we use the fact that $g+g^q$ has at least $(\frac{1}{2}q+m+\frac{1}{2})q+1$ zeros to conclude that
$$
c(g+g^q)=0 \pmod{X^{q^2}-X}.
$$
Then the fact that $g+g^q$ has at most $(\frac{1}{2}q+m+1)q$ distinct zeros implies that $c$ has more than $(\frac{1}{2}q-m-1)q$ distinct zeros. However, 
$$
c^q=\sum_{i=0}^{\frac{1}{2}q-1} \sum_{j=0}^{\frac{1}{2}q-m-2} c_{ij}^q X^{jq+i}  \pmod{X^{q^2}-X},
$$
which has degree at most  $(\frac{1}{2}q-m-2)q+\frac{1}{2}q-1$. This implies $c=0$, contradicting the fact that $c \neq 0$.
\end{proof}

\begin{lemma} \label{qoddlem}
If $(q+1)/2 \leqslant k \leqslant q-1$ and $q$ is odd then the minimum distance of the puncture code $P(C)$ of the $[q^2+1,k,q^2+2-k]_{q^2}$ Reed-Solomon code $C$ is at most $(q+1)(k-(q-1)/2)$.
\end{lemma}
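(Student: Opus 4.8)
The plan is to follow the pattern of Lemma~\ref{qevenlem}: produce an explicit polynomial $g\in\mathbb{F}_{q^2}[X]$ of degree at most $(q-k)q-1$ such that $g+g^q$ has exactly $q^2-(q+1)(k-\tfrac{1}{2}(q-1))$ distinct zeros in $\mathbb{F}_{q^2}$, and then quote Theorem~\ref{RSpunc2} (with its parameter $c$ equal to $0$) to obtain a codeword of $P(C)$ of weight $(q+1)(k-\tfrac{1}{2}(q-1))$. Setting $j=q-k$, the hypothesis $\tfrac{1}{2}(q+1)\leqslant k\leqslant q-1$ reads $1\leqslant j\leqslant\tfrac{1}{2}(q-1)$, the target weight becomes $(q+1)(\tfrac{1}{2}(q+1)-j)$, and the number of zeros we therefore need of $g+g^q$ works out to $N=1+\tfrac{1}{2}(q^2-1)+(j-1)(q+1)$.

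For the construction I would take the odd-characteristic analogue of the even case: fix a nonzero $\gamma\in\mathbb{F}_{q^2}$ with $\gamma^q=-\gamma$ (one exists since the $\mathbb{F}_q$-linear map $x\mapsto x+x^q$ on $\mathbb{F}_{q^2}$ has nontrivial kernel), choose $j-1$ distinct nonsquares $\beta_1,\dots,\beta_{j-1}$ of $\mathbb{F}_q^{\ast}$ — possible because $\mathbb{F}_q$ has $\tfrac{1}{2}(q-1)\geqslant j-1$ nonsquares — put $f(X)=\prod_{i=1}^{j-1}(X-\beta_i)\in\mathbb{F}_q[X]$, and set
$$
g(X)=\gamma\,X^{(q+1)/2}f(X^{q+1}).
$$
Since $f$ has coefficients in $\mathbb{F}_q$, the value $f(x^{q+1})$ lies in $\mathbb{F}_q$ for every $x\in\mathbb{F}_{q^2}$, so $\gamma^q=-\gamma$ gives
$$
g(x)+g(x)^q=\gamma\,x^{(q+1)/2}\bigl(1-x^{(q^2-1)/2}\bigr)f(x^{q+1}),\qquad x\in\mathbb{F}_{q^2}.
$$

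It then remains to count the zeros of this function and check the degree bound. The factor $x^{(q+1)/2}$ contributes only $x=0$; the factor $1-x^{(q^2-1)/2}$ vanishes precisely on the $\tfrac{1}{2}(q^2-1)$ nonzero squares of $\mathbb{F}_{q^2}$; and $f(x^{q+1})=0$ forces $x^{q+1}=\beta_i$ for some $i$, which for each $i$ has $q+1$ solutions (the norm map $x\mapsto x^{q+1}$ is $(q+1)$-to-one onto $\mathbb{F}_q^{\ast}$), and every such $x$ satisfies $x^{(q^2-1)/2}=\beta_i^{(q-1)/2}=-1$ since $\beta_i$ is a nonsquare of $\mathbb{F}_q$, so these $(j-1)(q+1)$ points are nonsquares of $\mathbb{F}_{q^2}$ and disjoint from the previous two sets. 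Hence $g+g^q$ has exactly $N=1+\tfrac{1}{2}(q^2-1)+(j-1)(q+1)$ distinct zeros, and $q^2-N=(q+1)(k-\tfrac{1}{2}(q-1))$. Finally $\deg g=\tfrac{1}{2}(q+1)+(j-1)(q+1)=\tfrac{1}{2}(q+1)(2j-1)$, and $\tfrac{1}{2}(q+1)(2j-1)\leqslant jq-1=(q-k)q-1$ is equivalent to $j\leqslant\tfrac{1}{2}(q-1)$, which is exactly the hypothesis — this is the one place where $k\geqslant\tfrac{1}{2}(q+1)$ is used, mirroring the role of $q/2\leqslant k$ in the even case. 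I do not expect a genuine obstacle here; the only subtlety is arranging the exponent $\tfrac{1}{2}(q+1)$ and the degree $j-1$ of $f$ so that the zero count and the degree bound come out exactly right.
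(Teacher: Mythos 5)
Your proposal is correct and follows essentially the same route as the paper: the paper takes $g(X)=X^{(q+1)/2}\prod_{e\in R}(X^{q+1}-e)$ with $R$ a set of $q-k-1$ squares of ${\mathbb F}_q^{\ast}$, so that $g+g^q$ picks up the factor $1+x^{(q^2-1)/2}$ vanishing on the nonsquares, which is just the mirror image of your choice of nonsquares $\beta_i$ together with the twist $\gamma^q=-\gamma$. The zero count, the disjointness argument via $x^{(q^2-1)/2}=e^{(q-1)/2}$, and the degree bound reducing to $q-k\leqslant\frac{1}{2}(q-1)$ all match the paper's proof.
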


\begin{proof}
Let  $R$ be a subset of ${\mathbb F}_q$ of size $q-k-1$ such that $e^{(q-1)/2}=1$ for all $e \in R$. 
Define
$$
g(X)=X^{(q+1)/2}\prod_{e \in R} (X^{q+1}-e)=\sum_{i=0}^{q-k-1}\sum_{j=0}^{q-1} g_{ij}X^{iq+j}.
$$
For all $x \in {\mathbb F}_{q^2}$,
$$
g(x)+g(x)^q=(x^{(q^2+q)/2}+x^{(q+1)/2})\prod_{e \in R} (x^{q+1}-e).
$$
 There are $q+1$ elements of ${\mathbb F}_{q^2}$ such that $x^{q+1}=e$ and for these elements $x^{(q+1)(q-1)/2}=1$, since $e^{(q-1)/2}=1$. There are $(q^2+1)/2$ elements of ${\mathbb F}_{q^2}$ such that 
 $$
 x^{(q^2+q)/2}+x^{(q+1)/2}=x^{(q+1)/2}(x^{(q^2-1)/2}+1)=0
$$
which are distinct from the other $(q-k-1)(q+1)$ zeros.
Thus, $g(x)+g(x)^q$ has 
$$
(q^2+1)/2+(q-k-1)(q+1)
$$ 
distinct zeros. 

By Theorem~\ref{RSpunc2}, $P(C)$ has a codeword of weight 
$$
q^2-(q^2+1)/2-(q-k-1)(q+1)=(q+1)(k-(q-1)/2).
$$
\end{proof}

\begin{theorem} \label{mdqodd}
If $(q+1)/2 \leqslant k \leqslant q-1$ and $q$ is odd then the minimum distance of the puncture code $P(C)$ of the $[q^2+1,k,q^2+2-k]_{q^2}$ Reed-Solomon code $C$ is $(q+1)(k-\frac{1}{2}(q-1))$.
\end{theorem}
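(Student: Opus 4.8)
Lemma~\ref{qoddlem} already gives the upper bound, so the content of the theorem is the matching lower bound: $P(C)$ has no nonzero codeword of weight less than $(q+1)(k-\tfrac12(q-1))$. I would prove this by adapting the proof of Theorem~\ref{mdqeven} to odd $q$. Suppose, for a contradiction, that $P(C)$ has a nonzero codeword of weight $n \leqslant (q+1)(k-\tfrac12(q-1))-1$. Since $n<q^2$ and $k\leqslant q-1$, Theorems~\ref{puncturecodethm} and \ref{hermortogRS2} yield a polynomial $g\in\mathbb{F}_{q^2}[X]$ with $\deg g\leqslant (q-k)q-1$ such that $g+g^q$ has at least $q^2+1-(q+1)(k-\tfrac12(q-1))$ distinct zeros in $\mathbb{F}_{q^2}$. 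Reducing modulo $X^{q^2}-X$ and using $g^q\equiv\sum g_{ij}^qX^{jq+i}$, the function $g+g^q$ is represented by a polynomial of degree at most $(q-1)q+(q-k-1)$; since the codeword is nonzero this function is not identically zero, so its number $N$ of distinct zeros satisfies $N\leqslant(q-1)q+(q-k-1)$.

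Now split into two cases according to $N\bmod q$, exactly as in Theorem~\ref{mdqeven} but taking the break point at the half-integers $\tfrac12(q-1)$ and $\tfrac12(q+1)$: Case~1, $aq+1\leqslant N\leqslant aq+\tfrac12(q-1)$; Case~2, $aq+\tfrac12(q+1)\leqslant N\leqslant(a+1)q$. Together these exhaust all possible values of $N$. Setting $m$ so that the relevant block index equals $\tfrac12(q+1)+m$ in Case~1 and $\tfrac12(q-1)+m$ in Case~2, the lower bound on the number of zeros forces $m\geqslant q-k-1$ while the degree bound forces $m\leqslant\tfrac12(q-3)$. In each case I would exhibit a nonzero $c(X)=\sum_{i,j}c_{ij}X^{iq+j}$, with $i$ running over an interval of length roughly $\tfrac12(q+1)$ and $j$ over an interval of length roughly $\tfrac12(q-1)-m$, whose coefficients are chosen to kill every monomial $X^{aq+b}$ of $c(g+g^q)$ with $a\in\{\tfrac12(q+1)+m,\ldots,q-1\}$ and $b$ in the complementary low range, with at most the single corner monomial $X^{(\frac12(q+1)+m)q}$ permitted to survive; a dimension count (one more coefficient than linear conditions) shows such a $c$ exists.

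Given such a $c$, one bounds $\deg(cg)$ and, using the coefficient-swap $X^{iq+j}\leftrightarrow X^{jq+i}$ in $g^q$, also $\deg(cg^q\bmod X^{q^2}-X)$, in order to conclude that the only monomial of $c(g+g^q)\bmod X^{q^2}-X$ of degree at least $(\tfrac12(q+1)+m)q$ is the permitted corner, so this reduced polynomial has degree strictly below $N$. As $g+g^q$, hence $c(g+g^q)$, vanishes at $N$ distinct points, we get $c(g+g^q)\equiv0\pmod{X^{q^2}-X}$. Therefore $c$ vanishes at every point where $g+g^q$ is nonzero, that is at at least $q^2-N$ points, and a short calculation shows $q^2-N$ strictly exceeds $\deg(c^q\bmod X^{q^2}-X)$; hence $c^q\equiv0$, so $c=0$, contradicting $c\neq0$.

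The bookkeeping is the hard part, and it is genuinely more delicate than for even $q$. Because the cut is at the half-integers $\tfrac12(q\pm1)$ the two cases are mildly asymmetric, and at the extreme block $m=\tfrac12(q-3)$ the index interval for $c$ is pressed to its boundary, so the direct transcription of the even-case choice degenerates in Case~2. The real work is to pick the $c$-intervals so that all three controlling inequalities hold simultaneously over the whole range of $m$ and $k$: more coefficients than conditions; $\deg(c(g+g^q)\bmod X^{q^2}-X)<N$ after imposing the conditions; and $q^2-N>\deg(c^q\bmod X^{q^2}-X)$. The extremal blocks may require a short ad hoc argument.
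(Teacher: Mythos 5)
Your strategy is the paper's own: the upper bound from Lemma~\ref{qoddlem}, then a contradiction obtained via Theorems~\ref{puncturecodethm} and~\ref{hermortogRS2} by building a multiplier $c(X)$, supported on a rectangle of exponents, whose coefficients are chosen to push $\deg\bigl(c(g+g^q) \bmod (X^{q^2}-X)\bigr)$ below the number of zeros of $g+g^q$, forcing $c(g+g^q)\equiv 0$ and then $c=0$. But the proposal stops exactly where the content of the theorem begins, and the two concrete commitments you do make would not survive the bookkeeping you defer. The sharpest failure is the last step. With the full rectangular support you propose ($i$ over an interval of length about $\tfrac{1}{2}(q+1)$, $j$ over one of length about $\tfrac{1}{2}(q-1)-m+1$), the polynomial $c^q \bmod (X^{q^2}-X)$ has degree up to $(\tfrac{1}{2}(q-1)-m)q+\tfrac{1}{2}(q-1)$, whereas the number of zeros you can force on $c$ from $c(g+g^q)\equiv 0$ in your Case~2 is only about $(\tfrac{1}{2}(q-1)-m)q+k-\tfrac{1}{2}(q-1)$; since $k\leqslant q-1$ this never exceeds $\deg c^q$, so you cannot conclude $c=0$. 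The paper's proof needs an extra device you do not mention: it additionally forces $c_{ij}=0$ for $j$ at its maximal value $\tfrac{1}{2}(q-1)-m$ and all $i$ beyond a threshold depending on $m$ and $k$, which cuts $\deg c^q$ down to $(\tfrac{1}{2}(q-1)-m)q+k-\tfrac{1}{2}(q-1)-1$ and makes the final comparison work; these lost degrees of freedom must then be re-budgeted against the number of linear conditions.

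The second gap is the case split. The guaranteed lower bound $q^2-(q+1)(k-\tfrac{1}{2}(q-1))+1$ on the number of zeros has residue $\tfrac{1}{2}(q+1)-k$ modulo $q$, so the paper anchors the two cases at residue $q-k$ rather than at the $k$-independent half-integers $\tfrac{1}{2}(q\pm 1)$: Case~1 runs from $(\tfrac{1}{2}(q-1)+m)q+\tfrac{1}{2}(q+1)-k$ to $(\tfrac{1}{2}(q-1)+m)q+q-k$, and Case~2 covers the rest of the block. This alignment is what allows the surviving corner monomials to be the whole range $(\tfrac{1}{2}(q-1)+m)q+b$ with $b\leqslant q-k-1$ (not a single monomial $X^{(\frac{1}{2}(q+1)+m)q}$ as you propose), which simultaneously keeps the number of imposed conditions below the number of coefficients and keeps the degree of $c(g+g^q)$ strictly below the case's lower bound on the zero count; it also yields the correct range $q-k\leqslant m\leqslant\tfrac{1}{2}(q-1)$ rather than your $q-k-1\leqslant m\leqslant\tfrac{1}{2}(q-3)$, and it is what necessitates the extra boundary conditions at $m=q-k$ in Case~1. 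You correctly flag that "the three controlling inequalities" must hold simultaneously and that the extremal blocks are delicate, but establishing that is the theorem; as written, the parameters you have pencilled in do not satisfy them.
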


\begin{proof}
Lemma~\ref{qoddlem} implies that there is a codeword of weight $(q+1)(k-\frac{1}{2}(q-1))$ in the puncture code, so we only need show that $P(C)$ cannot have codewords of less weight.

Suppose that $P(C)$ has a codeword of weight at most $(q+1)(k-\frac{1}{2}(q-1))-1$. By Theorem~\ref{hermortogRS2} and Theorem~\ref{puncturecodethm}, there is a polynomial $g \in {\mathbb F}_{q^2}[X]$ of degree at most $(q-k)q-1$ such that
$$
g+g^q
$$
has at least $q^2-(q+1)(k-\frac{1}{2}(q-1))=(q+\frac{1}{2}(q-1)-k)q+\frac{1}{2}(q+1)-k$ zeros. 

As in the proof of Theorem~\ref{mdqeven}, we will obtain a contradiction considering two separate cases.

\underline{Case 1:}
Suppose that $g+g^q$ has between $(\frac{1}{2}(q-1)+m)q+\frac{1}{2}(q+1)-k$ and $(\frac{1}{2}(q-1)+m)q+q-k$ distinct zeros in ${\mathbb F}_{q^2}$, for some $m$. 

By the above, we have that $m\geqslant q-k$. If $m\geqslant  \frac{1}{2}(q+1)$ then this would imply that $g+g^q$ has more zeros than its degree, so $m \leqslant \frac{1}{2}(q-1)$.

Let 
$$
c(X)=\sum_{i=0}^{\frac{1}{2}(q-1)} \sum_{j=0}^{\frac{1}{2}(q-1)-m} c_{ij} X^{iq+j},
$$
where the coefficients $c_{ij}$ are zero when $j=\frac{1}{2}(q-1)-m$ and $i \geqslant m+1$ and are chosen so that
$$
c(g+g^q) \pmod{X^{q^2}-X}
$$
has no terms of degree $aq+b$, where $a \in \{\frac{1}{2}(q-1)+m,\ldots,q-1\}$ and $b \in \{0,\ldots,\frac{1}{2}(q-3)\}$, unless $a=\frac{1}{2}(q-1)+m$ and  $b\leqslant  \frac{1}{2}(q-1)-k$. 

The degree of $cg$ is at most 
$$
(\tfrac{3}{2}(q-1)-k)q+\tfrac{3}{2}(q-1)-m.
$$
Thus, in the case $m=q-k$ we must also choose the coefficients of $c(X)$ so that $c(g+g^q)$ has no terms of degree $(\frac{1}{2}(q-1)+m)q+r$, for $r \in \{\frac{1}{2}(q+1)-k,\ldots,-1\}$.

Thus, in doing so, the degree of 
$$
c(g+g^q) \pmod{X^{q^2}-X}
$$
is less than the number of distinct zeros of $g+g^q$.

Such a non-zero polynomial $c(X)$ exists since, in the case $m>q-k$, we impose 
$$
(\tfrac{1}{2}(q+1)-m)\tfrac{1}{2}(q-1)
$$ 
linear homogeneous conditions and we have 
$$
(\tfrac{1}{2}(q-1)-m)\tfrac{1}{2}(q+1)+\tfrac{1}{2}(q+1)-m
$$ 
coefficients defining $c(X)$. In the case $m=q-k$, we impose 
$$
(k-\tfrac{1}{2}(q-1))\tfrac{1}{2}(q-1)+k-\tfrac{1}{2}(q+1)
$$ 
linear homogeneous conditions and we have 
$$
(k-\tfrac{1}{2}(q-1))\tfrac{1}{2}(q-1)+k-\tfrac{1}{2}(q-1)
$$ 
coefficients defining $c(X)$. 

Now we use the fact that $g+g^q$ has at least 
$$
(\tfrac{1}{2}(q-1)+m)q+\tfrac{1}{2}(q+1)-k
$$ 
zeros to conclude that
$$
c(g+g^q)=0 \pmod{X^{q^2}-X}.
$$
Then the fact that $g+g^q$ has at most 
$$
(\tfrac{1}{2}(q-1)+m)q+q-1-k
$$ 
distinct zeros implies that $c$ has more than 
$$
(\tfrac{1}{2}(q-1)-m)q+k+1
$$ 
distinct zeros. However, 
$$
c^q=\sum_{i=0}^{\frac{1}{2}(q-1)} \sum_{j=0}^{\frac{1}{2}(q-1)-m} c_{ij}^q X^{jq+i}  \pmod{X^{q^2}-X},
$$
which has degree at most  $(\frac{1}{2}(q-1)-m)q+m$. Recall that the coefficients $c_{ij}$ are zero when $j=\frac{1}{2}(q-1)-m$ and $i \geqslant m+1$.

This implies $c=0$, contradicting the fact that $c \neq 0$.

\underline{Case 2:}
Suppose that $g+g^q$ has between $(\frac{1}{2}(q-1)+m)q+q-k+1$ and $(\frac{1}{2}(q-1)+m+1)q+\frac{1}{2}(q-1)-k$ distinct zeros in ${\mathbb F}_{q^2}$, for some $m$. As before, we have that $\frac{1}{2}(q-1) \geqslant m\geqslant q-k$.

Let 
$$
c(X)= \sum_{i=0}^{\frac{1}{2}(q-1)}\sum_{j=0}^{\frac{1}{2}(q-1)-m} c_{ij} X^{iq+j},
$$
where $c_{ij}=0$, if $j=\frac{1}{2}(q-1)-m$ and $i \geqslant k-\frac{1}{2}(q-1)$, and the coefficients $c_{ij}$ are chosen so that
$$
c(g+g^q)
$$
has no terms of degree $aq+b$, where $a \in \{\frac{1}{2}(q-1)+m,\ldots,q-1\}$ and $b \in \{0,\ldots,\frac{1}{2}(q-3)\}$, unless $a=\frac{1}{2}(q-1)+m$ and $b \leqslant q-k-1$. 

Such a non-zero polynomial $c(X)$ exists since we impose 
$$
(\tfrac{1}{2}(q+1)-m)\tfrac{1}{2}(q-1)-(q-k)
$$
linear homogeneous conditions and we have 
$$
(\tfrac{1}{2}(q-1)-m)\tfrac{1}{2}(q+1)+k-\tfrac{1}{2}(q-1) \geqslant (\tfrac{1}{2}(q+1)-m)\tfrac{1}{2}(q-1)-(q-k)-m+\tfrac{1}{2}(q+1)
$$ 
coefficients defining $c(X)$.

The degree of $cg$ is at most 
$$
(\tfrac{3}{2}(q-1)-k)q+\tfrac{3}{2}(q-1)-m \leqslant (m+\tfrac{1}{2}(q-1))q+\tfrac{1}{2}(q-3)-m.
$$
Arguing as in Case 1, the only terms of degree $aq+b$ in $cg^q$ modulo $X^{q^2}-X$, for which $a \in \{\frac{1}{2}(q-1)+m,\ldots,q-1\}$, have $b \in \{0,\ldots, \frac{1}{2}(q-3)\}$.

However, we chose $c(X)$ so that $c(g+g^q)$ has no terms of degree $aq+b$, where $a \in \{\frac{1}{2}(q-1)+m,\ldots,q-1\}$ and $b \in \{0,\ldots,\frac{1}{2}(q-3)\}$, unless $a=\frac{1}{2}(q-1)+m$ and $b \leqslant q-k-1$. 

Hence, we conclude that
$$
c(g+g^q) \pmod{X^{q^2}-X}
$$
has degree at most $(\frac{1}{2}(q-1)+m)q+q-k-1$.

Now we use the fact that $g+g^q$ has at least $(\frac{1}{2}(q-1)+m)q+q-k$ zeros to conclude that
$$
c(g+g^q)=0 \pmod{X^{q^2}-X}.
$$
Then the fact that $g+g^q$ has at most 
$$
(\tfrac{1}{2}(q-1)+m+1)q+\tfrac{1}{2}(q-1)-k
$$ 
distinct zeros implies that $c$ has at least
$$
(\tfrac{1}{2}(q-1)-m)q+k-\tfrac{1}{2}(q-1)
$$ 
distinct zeros. 

However, 
$$
c^q(X)= \sum_{i=0}^{\frac{1}{2}(q-1)}\sum_{j=0}^{\frac{1}{2}(q-1)-m}c_{ij} X^{jq+i} \pmod{X^{q^2}-X},
$$
and $c_{ij}=0$, if $j=\frac{1}{2}(q-1)-m$ and $i \geqslant k-\frac{1}{2}(q-1)$. 

Thus, $c^q$ (mod $X^{q^2}-X)$
has degree at most 
$$
(\tfrac{1}{2}(q-1)-m)q+k-\tfrac{1}{2}(q-1)-1.
$$
This implies $c=0$, contradicting the fact that $c \neq 0$.

\end{proof}

\section{Hermitian self-orthogonal generalised  Reed-Solomon codes of length $q^2+1$} \label{qsqplusone}

The existence of a Hermitian self-orthogonal $[q^2+1,k,q^2-k+2]_{q^2}$ code is of particular importance since these codes are of the same length as the Reed-Solomon code. Apart from the exceptional case $q$ is even and $k\in \{3,q-1\}$, no longer MDS code is known. 

Existence was already demonstrated in \cite{Ball2021} for $k \leqslant q-2$, so we restrict ourselves to the case $k=q-1$. In \cite{GR2015}, the existence of a  Hermitian self-orthogonal $[q^2+1,q-1,q^2-q+3]_{q^2}$ code was shown for $q$ odd and for $q=2^h$, where $h \in \{3,4,5,6,7\}$, whereas in \cite{Ball2021} non-existence was proven for $q=4$.

Here we will prove that such codes exist for all $q=2^r$, when $r \geqslant 3$ is odd.




\begin{lemma}  \label{nozerosqnonsquare}
Suppose that $q=2^r$, where $r$ is odd. If $e$ is such that $e^{q+1}=1$ and $e^{(q+1)/3} \neq 1$ then the polynomial
$$
eX^3+e^qX^{3q}+X^{q+1}+1
$$
has no zeros in ${\mathbb F}_{q^2}$.
\end{lemma}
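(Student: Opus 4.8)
The plan is to evaluate $h(X) := eX^{3} + e^{q}X^{3q} + X^{q+1} + 1$ at a point $x \in \mathbb{F}_{q^2}$, observe that $h(x)$ always lies in $\mathbb{F}_q$, and reduce the non-vanishing to a statement on the ``unit circle'' $\mu_{q+1} = \{\zeta \in \mathbb{F}_{q^2} \mid \zeta^{q+1}=1\}$. For $x \in \mathbb{F}_{q^2}$ one has $h(x) = \mathrm{Tr}_{q^2/q}(ex^3) + x^{q+1} + 1$; since $h(0)=1\neq 0$ we may assume $x \neq 0$. As $q$ is even, $\gcd(q-1,q+1)=1$ and $\mathbb{F}_{q^2}^{*} = \mathbb{F}_q^{*}\times \mu_{q+1}$, so I write $x = \mu\omega$ with $\mu\in\mathbb{F}_q^{*}$, $\omega \in \mu_{q+1}$. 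Then $x^{q+1} = \mu^{2}$ and, setting $\zeta := e\omega^{3}\in \mu_{q+1}$, we get $ex^{3} = \mu^{3}\zeta$, hence $\mathrm{Tr}_{q^2/q}(ex^{3}) = \mu^{3}(\zeta+\zeta^{-1})$ (using $\zeta^{q}=\zeta^{-1}$). Thus $h(x)=0$ is equivalent to
$$
\zeta + \zeta^{-1} = \frac{(\mu+1)^{2}}{\mu^{3}} =: \tau .
$$
The bookkeeping point that brings in the hypothesis is that $\zeta^{(q+1)/3} = e^{(q+1)/3}\,\omega^{q+1} = e^{(q+1)/3}\neq 1$; that is, $\zeta$ lies in a non-trivial coset of the subgroup of cubes of $\mu_{q+1}$, so $\zeta$ is not a cube in $\mu_{q+1}$.

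I would then derive a contradiction by showing that, on the contrary, the displayed equation forces $\zeta$ to be a cube. First $\mu\neq 1$ (if $\mu=1$ then $\zeta+\zeta^{-1}=0$, so $\zeta=1$, not in a non-trivial coset), hence $\tau\neq 0$ and $\zeta\neq 1$; since $q$ is even, $\zeta \in \mathbb{F}_q^{*}$ together with $\zeta^{q+1}=1$ would give $\zeta^{2}=1$, so $\zeta\notin\mathbb{F}_q$ and $Z^{2}+\tau Z + 1$ is irreducible over $\mathbb{F}_q$. The standard characteristic-$2$ criterion then gives $\mathrm{Tr}_{q/2}(1/\tau^{2})=1$; writing $1/\tau^{2}=\bigl(\mu^{3}/(\mu+1)^{2}\bigr)^{2}$, substituting $s=\mu+1$, and using $\mathrm{Tr}(y^{2})=\mathrm{Tr}(y)$, this collapses to $\mathrm{Tr}_{q/2}(\mu)=1$. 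The key algebraic identity is
$$
(1/\mu)^{3} + 1/\mu = \frac{1+\mu^{2}}{\mu^{3}} = \tau ,
$$
so with $\sigma := 1/\mu$ we have $\tau = \sigma^{3}+\sigma$; and since $\mathrm{Tr}_{q/2}(1/\sigma^{2}) = \mathrm{Tr}_{q/2}(\mu^{2}) = \mathrm{Tr}_{q/2}(\mu) = 1$, the polynomial $Z^{2}+\sigma Z + 1$ is also irreducible over $\mathbb{F}_q$. Its roots $\eta,\eta^{-1}$ are Galois conjugates with product $1$, so $\eta^{q}=\eta^{-1}$, i.e. $\eta\in\mu_{q+1}$ and $\eta+\eta^{-1}=\sigma$. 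In characteristic $2$, $\eta^{3}+\eta^{-3} = (\eta+\eta^{-1})^{3}+(\eta+\eta^{-1}) = \sigma^{3}+\sigma = \tau$, so $\eta^{3}$ is a root of $Z^{2}+\tau Z + 1$; as that quadratic has precisely the roots $\zeta,\zeta^{-1}$, we get $\zeta\in\{\eta^{3},\eta^{-3}\}$, whence $\zeta$ is a cube in $\mu_{q+1}$ and $\zeta^{(q+1)/3}=1$ — contradicting $\zeta^{(q+1)/3}=e^{(q+1)/3}\neq 1$. This contradiction shows $h$ has no zero in $\mathbb{F}_{q^2}$.

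The one genuinely non-mechanical step, which I regard as the main obstacle, is recognising the substitution $\sigma = 1/\mu$ that exhibits $\tau = (\mu+1)^{2}/\mu^{3}$ in the form $\sigma^{3}+\sigma$ with $\sigma$ of the shape $\eta+\eta^{-1}$, $\eta\in\mu_{q+1}$; this is exactly what lets one ``extract a cube root'' of $\zeta$ inside $\mu_{q+1}$. The remaining ingredients are routine: the decomposition $\mathbb{F}_{q^2}^{*}=\mathbb{F}_q^{*}\times\mu_{q+1}$ and existence of square roots in $\mathbb{F}_q$ (both using $q$ even), the quadratic irreducibility criterion $\mathrm{Tr}_{q/2}(c/b^{2})=1$ for $Z^2+bZ+c$ with $b\neq 0$, Frobenius-invariance of the trace, and the identity $\eta^{3}+\eta^{-3}=(\eta+\eta^{-1})^{3}+(\eta+\eta^{-1})$. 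Note that the hypothesis $e^{(q+1)/3}\neq 1$ enters only through $\zeta^{(q+1)/3}=e^{(q+1)/3}$, and that the argument in fact proves slightly more: for every $\mu\in\mathbb{F}_q^{*}\setminus\{1\}$ with $\mathrm{Tr}_{q/2}(\mu)=1$, the solutions $\zeta\in\mu_{q+1}$ of $\zeta+\zeta^{-1}=(\mu+1)^{2}/\mu^{3}$ are always cubes.
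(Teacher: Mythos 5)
Your proof is correct and follows essentially the same route as the paper's: the same decomposition $x=\mu\omega$ coming from $\mathbb{F}_{q^2}^{*}=\mathbb{F}_q^{*}\times\mu_{q+1}$, and the same key identity $\eta^{3}+\eta^{-3}=(\eta+\eta^{-1})^{3}+(\eta+\eta^{-1})$ used to extract a cube root of $\zeta$ and contradict $e^{(q+1)/3}\neq 1$. The only divergence is in the middle: the paper writes $1/\mu=c+c^{-1}$ with $c$ allowed to lie in $\mathbb{F}_q^{*}\cup\mu_{q+1}$ (a short counting argument), concludes that $e$ is a cube in $\mathbb{F}_{q^2}^{*}$, and finishes with $\gcd(3,q-1)=1$, whereas your trace computations forcing $\eta\in\mu_{q+1}$ are correct but are work the paper sidesteps.
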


\begin{proof}
Suppose 
$$
ex^3+e^{q}x^{3q}+x^{q+1}+1=0,
$$ 
for some $x \in {\mathbb F}_{q^2}$. 

We can write $x=ay$, where $a^{q+1}=1$ and $y \in {\mathbb F}_q$. Then, the above becomes,
\begin{equation} \label{eqncube}
(ea^3)+(ea^3)^{-1} +y^{-1}+y^{-3}=0.
\end{equation}
If $c$ is a $(q+1)$-st root of unity or an element of ${\mathbb F}_q$ then $c+c^{-1} \in {\mathbb F}_q$.  If $c+c^{-1}=b+b^{-1}$ then $c=b$ or $c=b^{-1}$. Thus, there is a $c\in {\mathbb F}_{q^2}$ such that $y^{-1}=c+c^{-1}$.

Observe that $y^{-1}+y^{-3}=c^3+c^{-3}$, so (\ref{eqncube}) becomes
$$
(ea^3+c^3)(1+(ea^3c^3)^{-1})=0.
$$
Thus, either $e=c^3/a^3$ or $e=(ca)^{-3}$. Either way, $e$ is a cube. Since $(q-1,3)=1$ and $e$ is also a $(q+1)$-st root of unity, $e=t^{3(q-1)}$, for some $t \in {\mathbb F}_{q^2}$. Hence, $e^{(q+1)/3}=1$, contradicting the assumption that $e^{(q+1)/3}\neq 1$.





\end{proof}
 
\begin{theorem} \label{qsqpl1}
If $q=2^r$ and $r \geqslant 3$ is odd then there is a Hermitian self-orthogonal $[q^2+1,q-1,q^2-q+3]_{q^2}$ generalised Reed-Solomon code.
\end{theorem}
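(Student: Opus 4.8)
The plan is to apply the forward direction of Theorem~\ref{hermortogRS} with $k=q-1$ and $n=q^2+1$, using Lemma~\ref{nozerosqnonsquare} to supply the required polynomial. For $k=q-1$ one has $q-k-1=0$ and $q-k=1$, so the first family of polynomials in Theorem~\ref{hermortogRS} consists of those of the form $h(X)=\sum_{j=1}^{q-1}(h_{0j}X^{j}+h_{0j}^{q}X^{jq})+h_0+X^{q+1}$ with $h_{0j}\in\mathbb{F}_{q^2}$, $h_0\in\mathbb{F}_q$, and what is needed is such an $h$ with exactly $q^2+1-n=0$ distinct zeros on $\mathbb{F}_{q^2}$, i.e.\ one with no zeros at all. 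I would take $h_{03}=e$, all other $h_{0j}=0$, and $h_0=1$, so that $h(X)=eX^3+e^qX^{3q}+X^{q+1}+1$, which (since $q\geqslant 8$ gives $3\leqslant q-1$) is a legitimate member of this family and is precisely the polynomial studied in Lemma~\ref{nozerosqnonsquare}.

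First I would verify that the arithmetic hypothesis of Lemma~\ref{nozerosqnonsquare} can be met, namely that there is an $e\in\mathbb{F}_{q^2}$ with $e^{q+1}=1$ and $e^{(q+1)/3}\neq 1$. Since $2\equiv -1\pmod 3$ and $r$ is odd, $q+1=2^r+1\equiv 0\pmod 3$, so $(q+1)/3$ is a positive integer, and $r\geqslant 3$ forces $q+1\geqslant 9$. The group $\mu_{q+1}$ of $(q+1)$-st roots of unity inside $\mathbb{F}_{q^2}^{*}$ is cyclic of order $q+1$; the elements satisfying $e^{(q+1)/3}=1$ form its unique subgroup of order $(q+1)/3$, which is proper because $(q+1)/3<q+1$. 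Hence any generator of $\mu_{q+1}$ is a valid choice of $e$.

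With such an $e$ fixed, Lemma~\ref{nozerosqnonsquare} gives that $h(X)=eX^3+e^qX^{3q}+X^{q+1}+1$ has no zeros in $\mathbb{F}_{q^2}$. Applying Theorem~\ref{hermortogRS} with $k=q-1$ and $n=q^2+1$, so that $q^2+1-n=0$ matches the number of zeros of $h$, then yields a linear $[q^2+1,q-1,q^2-q+3]_{q^2}$ Hermitian self-orthogonal generalised Reed-Solomon code (the code is not truncated here, as $n=q^2+1$), which is exactly the assertion. One could add that, via Theorem~\ref{sigmaortog}, this also produces a $[\![q^2+1,q^2-2q+3,d]\!]_q$ quantum code with $d\geqslant q$, since the Hermitian dual of a generalised Reed-Solomon code is again MDS.

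Since both Lemma~\ref{nozerosqnonsquare} and Theorem~\ref{hermortogRS} are already available, there is no real obstacle remaining: the only points to check are the divisibility $3\mid q+1$ for odd $r$ and the existence of $e$, both immediate, together with the bookkeeping that $h$ sits in the normalised form demanded by Theorem~\ref{hermortogRS} (coefficient of $X^{(q-k)(q+1)}=X^{q+1}$ equal to $1$, constant term in $\mathbb{F}_q$, and the remaining coefficients in $\mathbb{F}_{q^2}$). All the genuine difficulty was absorbed into the proof of Lemma~\ref{nozerosqnonsquare}; this theorem is essentially its corollary.
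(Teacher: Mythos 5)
Your proposal is correct and follows exactly the paper's route: the paper's own proof is the one-line observation that the result follows from Theorem~\ref{hermortogRS} and Lemma~\ref{nozerosqnonsquare}, and you have simply filled in the (correct) bookkeeping — that $3\mid q+1$ when $r$ is odd, that a suitable $e$ exists in the cyclic group of $(q+1)$-st roots of unity, and that $eX^3+e^qX^{3q}+X^{q+1}+1$ has the normalised form required with $k=q-1$. No gaps.
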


\begin{proof}
This follows from Theorem~\ref{hermortogRS} and Lemma~\ref{nozerosqnonsquare}.
\end{proof}

\section{Previous results on Hermitian self-orthogonal MDS codes}

There are many constructions of quantum MDS codes with $d \leqslant q+1$, mostly based on cyclic or constacyclic constructions and generalised Reed-Solomon codes. For example those contained in \cite{CLZ2015,FF2018,FF2018b}, \cite{GR2015,HXC2016}, \cite{JKW2017}, \cite{JX2014,JLLX2010,KZ2012,KZL2014}, \cite{LX2010,LXW2008},
\cite{SYC2018,SYW2019,SYZ2017} and \cite{WZ2015, ZC2014,ZG2017}.



These articles contain too many constructions to list them all. By means of example, in Table \ref{tab:sumary}, we detail the seven classes constructed by Tao Zhang and Gennian Ge in \cite{ZG2017} using Hermitian self orthogonal generalised Reed-Solomon codes. 

Examples~\ref{example1}, \ref{example2} and \ref{example3} give examples of Hermitian self-orthogonal MDS codes of length $n$, where $n$ is not just a multiple of $q+1$ or $q-1$. Using Theorem~\ref{hermortogRS2}, one has much more scope to construct examples than using the previous methods which were employed in the articles cited above. 

\begin{table}[htb!]
	\centering

	\begin{tabular}{lll}
		\hline
		\textbf{Class} & \textbf{Length} & \textbf{Distance}\\
		\hline
		1 & $n=bm(q+1),$  & $2 \leq d \leq \frac{q+1}{2} +m$\\
		&  $m|\frac{q-1}{2}, bm \leq q-1$ & \\
		
		2 & $n=(bm+c(m-1))(q+1)$, & $2 \leq d \leq \frac{q-1}{2} +m$\\
		& $m|\frac{q-1}{2},b,c \geq 0, (b+c)m \leq q-1$ & \\
		& and $b \geq 1$ or $m \geq 2$ & \\
		
		3 & $n=bm(q-1),$  & $2 \leq d \leq \frac{q-1}{2} +m$\\
		&  $m|\frac{q+1}{2}, bm \leq q+1$ & \\
		
		4  & $n=(bm+c(m-1))(q-1)$, & $2 \leq d \leq \frac{q-3}{2} +m$\\
		& $m|\frac{q-1}{2},b,c \geq 0, (b+c)m \leq q+1$ & \\
		& and $b \geq 1$ or $m \geq 2$ & \\
		
		5 & $n=(c_1 (2m-1)+(c_2 +c_3)m)(q-1)$ & $2 \leq d \leq \frac{q-1}{2} +m$\\
		& $m|\frac{q+1}{2}, c_1,c_2,c_3 \geq 0,0 \leq c_1+c_2 \leq \frac{q+1}{2m},$ & \\
		& $0 \leq c_1+c_3 \leq \frac{q+1}{2m}$ and $c_1+c_2+c_3 \geq 1$ & \\
		
		6 & $n=c(q-1),$ & $2 \leq d \leq \frac{q-1}{2} +c_1,$\\
		& $q=2am-1, gcd(a,m)=1,$ & $c_1=$\begin{math}
			
			\begin{cases}
				c;  & \text{if $1 \leq c \leq a+m-1,$}\\
				\lfloor \frac{c}{2} \rfloor; &  \text{if $a+m \leq c \leq 2(a+m-1).$}
			\end{cases}
		
	\end{math} \\
		& $1 \leq c \leq 2(a+m-1)$ & \\
		
		7 & $n=c(q+1),$ & $2 \leq d \leq \frac{q+1}{2} +c_1,$\\
	& $q=2am-1, gcd(a,m)=1,$ & $c_1=$\begin{math}
		
		\begin{cases}
			c;  & \text{if $1 \leq c \leq a+m-1,$}\\
			\lfloor \frac{c}{2} \rfloor; &  \text{if $a+m \leq c \leq 2(a+m-1).$}
		\end{cases}
		
	\end{math} \\
	& $1 \leq c \leq 2(a+m-1)$ & \\
		
	\end{tabular}
		\caption{Summary of Quantum MDS Codes constructed in \cite{ZG2017}.} 
	\label{tab:sumary}
\end{table}

\section{Further work and open problems}

As mentioned in Section~\ref{qsqplusone}, the existence of a $[q^2+1,k,q^2-k+2]_{q^2}$ Hermitian self-orthogonal generalised Reed-Solomon code is of particular interest, since this determines if the Reed-Solomon code itself is linearly equivalent to a Hermitian self-orthogonal code. It was proven in \cite{Ball2021} that such codes exist for all $k \leqslant q-2$ and $k=q+1$ and do not exist  for $k\geqslant q+2$. Thus, in the case $n=q^2+1$, we are only interested in $k=q-1$. In \cite{GR2015}, existence was shown for $q$ odd and $q=2^h$, where $h \in \{3,4,5,6,7\}$, whereas in \cite{Ball2021} non-existence was proven for $q=4$. In Theorem~\ref{qsqpl1} of this article, we have proved existence for all $q=2^h$ with $h$ odd. Thus, we are left with only the cases $q=2^h$, $h$ even and $h \geqslant 8$. According to Theorem~\ref{hermortogRS}, to prove existence it suffices to find a polynomial
$$
h(X)=\sum_{i=1}^{q-1} (h_iX^i+h_i^qX^{iq})+c+X^{q+1},
$$
where $h_i \in {\mathbb F}_{q^2}$ and $c \in {\mathbb F}_q$, which has no zeros in ${\mathbb F}_{q^2}$.

Conjecture 15 from \cite{GR2015} addresses another question. It conjectures that, for $q=2^h$ and $q \neq 4$, there are quantum MDS codes with parameters $[\![n,n-6,4]\!]_q$ for all $6 \leqslant n \leqslant q^2+2$. According to Theorem~\ref{hermortogRS2}, together with Theorem~\ref{sigmaortog}, this would be verified (for $n \leqslant q^2$) if one could find a polynomial $g \in {\mathbb F}_{q^2}[X]$, of degree at most $(q-3)q-1$, such that $g(x)+g(x)^q$ has $q^2-n$ distinct zeros in ${\mathbb F}_{q^2}$, for values of $n \leqslant q^2$.

\section{Acknowledgements}

The proof of Lemma~\ref{nozerosqnonsquare} is due to Prof. Aart Blokhuis and we are very grateful to him for providing us with a proof of what we had only been able to verify by computer for $r \leqslant 19$.

We are grateful to the referees and the editor who provided us with feedback which was very helpful.

\vspace{1cm}

   Simeon Ball\\
   Departament de Matem\`atiques, \\
Universitat Polit\`ecnica de Catalunya, \\
M\`odul C3, Campus Nord,\\
Carrer Jordi Girona 1-3,\\
08034 Barcelona, Spain \\
   {\tt simeon.michael.ball@upc.edu} \\

Ricard Vilar \\
   Departament de Matem\`atiques, \\
Universitat Polit\`ecnica de Catalunya, \\
M\`odul C3, Campus Nord,\\
Carrer Jordi Girona 1-3,\\
08034 Barcelona, Spain \\
  {\tt ricard.vilar@upc.edu} \\

\end{document}